\newcommand{\Interact}{\operatorname{Interact}}
\newcommand{\mx}{\mathcal{M}^{\mathcal{X}}}
\newcommand{\my}{\mathcal{M}^{\mathcal{Y}}}
\newcommand{\A}{A}
\newcommand{\M}{\mathcal{M}}
\newcommand{\X}{\mathcal{X}}
\newcommand{\Y}{\mathcal{Y}}
\newcommand{\ed}{(\varepsilon,\delta)}
\newcommand{\remove}[1]{}
\title{Improved Generalization Guarantees in Restricted Data Models} %TODO Please add
\author{Elbert Du\footnote{Corresponding Author}}{Department of Computer Science, Harvard University, USA}{elbertdu100@gmail.com}{}{}%TODO mandatory, please use full name; only 1 author per \author macro; first two parameters are mandatory, other parameters can be empty. Please provide at least the name of the affiliation and the country. The full address is optional. Use additional curly braces to indicate the correct name splitting when the last name consists of multiple name parts.
\author{Cynthia Dwork\footnote{Corresponding Author}}{Department of Computer Science, Harvard University, USA}{dwork@seas.harvard.edu}{}{}
\authorrunning{E Du and C Dwork} %TODO mandatory. First: Use abbreviated first/middle names. Second (only in severe cases): Use first author plus 'et al.'
\keywords{Differential Privacy, Adaptive Data Analysis, Transfer Theorem} %TODO mandatory; please add comma-separated list of keywords
\begin{document}
\maketitle

\begin{abstract}
Differential privacy is known to protect against threats to validity incurred due to adaptive, or exploratory, data analysis -- even when the analyst adversarially searches for a statistical estimate that diverges from the true value of the quantity of interest on the underlying population. The cost of this protection is the accuracy loss incurred by differential privacy.  %While it is tempting to argue that data analysts are benign and therefore the full protection of differential privacy unwarranted, Freedman's paradox tells us that na\"{i}vet\'{e} is not easily distinguished from malice. 
In this work, inspired by standard models in the genomics literature,  we consider data models in which individuals are represented by a sequence of attributes with the property that where distant attributes are only weakly correlated.   We show that, under this assumption, it is possible to "re-use" privacy budget on different portions of the data, significantly improving accuracy without increasing the risk of overfitting.

\end{abstract}
\section{Introduction}

It has been known for nearly a decade that interacting with data in a differentially private fashion provides a universal approach to reducing the risk of spurious scientific discoveries incurred by {\em adaptive}, or {\em exploratory}, data analysis~\cite{dwork2014preserving,dwork2015preserving}, in which new analyses or questions posed of the data depend on the outcomes of previous analyses.  %By definition, differential privacy restricts an analyst's knowledge of the dataset. 
Strengthenings of these initial results, and extensions to other information-restrictive interactions, rapidly followed, for example, \cite{bassily2021algorithmic,dwork2015generalization}.  In these works and their {\it sequelae}, the data analyst is viewed as an {\em accuracy adversary} whose goal is to find a query on which the dataset (or the response produced by a mechanism that interacts with the data) is not representative of the population.

For some kinds of data and analyses, for example, in Genome-Wide Association Studies (GWAS), which involve vast numbers of statistical queries on very high dimensional data, differential privacy faces daunting lower bounds~\cite{bun2018fingerprinting}.  However, our interest in this work is in accuracy, and not privacy {\it per se}. 
Inspired by two natural examples, we consider the question of whether we can improve on the accuracy by exploiting independence properties in the features of the data.  In data streams, it is often assumed that elements far apart in the stream are uncorrelated or only weakly correlated, with the correlation decreasing as the distance increases. In a stream, data of different individuals are interleaved; genomic information has this same low-correlation property even in the DNA for a single individual: for example, chromosomes are considered to be unrelated, and even within a chromosome correlations decrease with distance \cite{Slatkin2008LinkageD}.

While genomic data is our motivating example, we note that similar assumptions are reasonable in other settings.  For example, in certain kinds of image data distant pixels may be relatively uncorrelated even within a single image.  We will make this notion precise in Section~\ref{sec:preliminaries}.

The line of work described above gave rise to a number of so-called "transfer theorems," and we will make use of the sharp recent addition to this literature in~\cite{jung2019new}.  Transfer theorems generally say that if a query-response mechanism satisfies some specific quantifiable constraint on the information it imparts, then an analyst interacting with this mechanism cannot overfit to within some related quantity.  In the context of differential privacy the requirement is that the mechanism must be $(\varepsilon,\delta)$-differentially private and $(\alpha',\beta')$-sample accurate\footnote{That is, with probability at least $1-\beta'$ the responses produced are within $\alpha'$ of their sample values.}, and the guarantee from the theorem is that the responses will be $(\alpha=\alpha(\epsilon,\delta,\alpha'), \beta=\beta(\epsilon,\delta,\beta'))$-distributionally accurate, meaning that with probability at least $1-\beta$ the responses are within $\alpha$ of their distributional values. 

Our restriction on data models comes into play here: consider a genome-wide association study (GWAS), in which the dataset contains, for each of $n$ individuals, a string of potentially millions of Single Nucleotide Polynomorphisms (SNPs).  A typical study will make huge numbers of counting queries, looking for SNPs that are associated with a disease, at a huge cost in accuracy, as the data of each individual simultaneously affect all these counts. We asked the following question: under the assumption that distant SNPs in the genome of any given individual are at best very loosely correlated, is it possible to "re-use" privacy budget when examining distant portions of the genome? We will not achieve privacy in so doing, but can we achieve better accuracy?  For example, if we examine the dataset one chromosome at a time, meaning, we analyze the first chromosome for everyone in the dataset using $(\varepsilon_0,\delta_0)$-DP and a single application of a transfer theorem to ensure validity on the queries for this chromosome, and then examine the second chromosome for everyone in the dataset, "re-using" $(\varepsilon_0,\delta_0)$-DP, and it really is the case that one's first and second chromosomes are unrelated, can we safely apply the transfer theorem a second time to conclude that the queries on the second chromosome have not overfit, and so on?  We obtain an affirmative answer to this and other, less restrictive, data access models.  The key factors in the analysis are (1) the independence of the features (chromosomes, distant SNPs) and (2) the exclusion of queries that simultaneously operate on distant features (sums of adjacent features permitted, sums of distant features not supported). 

Our first result considers the model in which each individual's data is partitioned into a sequence of $m$ fully independent blocks.  Roughly speaking, it says that the privacy budget for a single block can be re-used, risking only a factor of $m$ increase in failure probability.

\begin{theorem}
(Informal) If the data consists of $m$ independent blocks, and our mechanism $M$ performs an $(\epsilon, \delta)$-DP and $(\alpha, \beta)$-sample accurate interaction on each block, then $M$ is $(\alpha', m\beta')$-distributionally accurate, where $\alpha'$ and $\beta'$ are the parameters we get from the transfer theorem on each block.
\end{theorem}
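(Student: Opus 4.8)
The plan is to reduce the $m$-block interaction to $m$ independent single-block interactions, apply the transfer theorem of~\cite{jung2019new} to each, and combine the conclusions by a union bound. The crucial structural fact is that, because the blocks are mutually independent, conditioning on everything that happened during the interaction with blocks $1,\dots,i-1$ leaves the distribution of block $i$ unchanged; hence, after such conditioning, the portion of $M$ operating on block $i$ is still an honest $(\epsilon,\delta)$-DP, $(\alpha,\beta)$-sample-accurate interaction run against a fresh i.i.d.\ sample from the true marginal on block $i$.

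Concretely, I would first fix notation. Write the dataset as $\mathbf{S}=(\mathbf{S}_1,\dots,\mathbf{S}_m)$, where $\mathbf{S}_i$ collects the block-$i$ coordinates of all $n$ individuals; by the independence assumption the $\mathbf{S}_i$ are mutually independent, with $\mathbf{S}_i$ consisting of $n$ i.i.d.\ draws from the block-$i$ marginal $P_i$. Let $\pi_{<i}$ denote the complete transcript produced while interacting with blocks $1,\dots,i-1$ (queries, responses, and the analyst's and mechanism's coin tosses used so far). Since $M$ processes the blocks in order and queries on a block depend only on that block's coordinates, $\pi_{<i}$ is a randomized function of $\mathbf{S}_1,\dots,\mathbf{S}_{i-1}$ alone, and is therefore independent of $\mathbf{S}_i$.

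Next, condition on any particular value $\tau$ of $\pi_{<i}$. Under this conditioning the analyst interacting with block $i$ is a well-defined (adaptive) analyst that merely carries $\tau$ as hard-wired advice, the block-$i$ part of $M$ remains $(\epsilon,\delta)$-DP and $(\alpha,\beta)$-sample accurate by hypothesis, and --- this is the point where independence is used --- $\mathbf{S}_i$ still consists of $n$ i.i.d.\ draws from $P_i$. The transfer theorem therefore applies unchanged: with probability at least $1-\beta'$ over $\mathbf{S}_i$ and the fresh coins, every response to a query on block $i$ lies within $\alpha'$ of that query's value under $P_i$, which is its true distributional value. As this bound holds for every $\tau$, it holds unconditionally.

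Finally, a union bound over $i=1,\dots,m$ shows that, except with probability at most $m\beta'$, every query asked on every block is answered to within $\alpha'$ of its distributional value, i.e., $M$ is $(\alpha',m\beta')$-distributionally accurate. I expect the main obstacle to be the conditioning argument in the third step: one has to verify that the per-block DP and sample-accuracy hypotheses are genuinely history-independent, so that the transfer theorem can be invoked inside the conditioning, and that the distributional value appearing in the transfer theorem coincides with the expectation under $P_i$ --- which is exactly where the modeling restriction that queries never mix distant features is needed. Once this is in place, the union bound is routine.
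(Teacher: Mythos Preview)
Your argument is correct but implicitly adds a streaming assumption: you write that ``$M$ processes the blocks in order,'' so that $\pi_{<i}$ is a function of $\mathbf{S}_1,\dots,\mathbf{S}_{i-1}$ alone and hence independent of $\mathbf{S}_i$. This is precisely the intuition the paper spells out in the introduction, and for that access pattern your conditioning argument goes through cleanly. However, the formal version proved in the paper (Theorem~\ref{thm:independent}) does not impose any ordering: the analyst may interleave queries on different blocks arbitrarily, so there is no prefix $\pi_{<i}$ that is independent of $\mathbf{S}_i$, and your step ``$\pi_{<i}$ is a randomized function of $\mathbf{S}_1,\dots,\mathbf{S}_{i-1}$ alone'' fails.

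To cover the general case the paper uses a different device. For each $i$ it defines a thought-experiment mechanism $\mathcal{M}'_i$ that receives only $X^{(i)}$ (the $i$th block of the true sample), \emph{resamples} all other blocks freshly from $D$, and then runs the full interaction $\Interact(\mathcal{M},A,\cdot)$ on this hybrid dataset $X'$, recording in its output transcript only the block-$i$ queries and answers. Because the blocks are independent, $X'\sim D^n$, so the recorded transcript is distributed exactly as the block-$i$ projection of the real transcript. On the other hand, with respect to its actual input $X^{(i)}$, $\mathcal{M}'_i$ is $(\epsilon,\delta)$-DP (all other blocks are internal coins) and $(\alpha,\beta)$-sample accurate, so the transfer theorem applies directly to it. A union bound over $i$ finishes.

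Your conditioning argument is simpler when streaming holds and avoids the somewhat artificial ``mechanism that knows $D$'' thought experiment. The paper's resampling argument buys generality: it still works when, say, a later block-$1$ query adaptively depends on earlier block-$i$ answers.
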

To build intuition for this result, suppose that, for each individual, we have a series of $m>1$ mutually independent blocks of features $B_1,B_2,...B_m$.  That is, there are $m$ distributions $D_1,\dots,D_m$ and the data of each individual is a draw from the product distribution $D_1 \times D_2 \times \dots \times D_m$.  Suppose, for this intuition-building only, that the mechanism accesses the data in $m$ epochs, first accessing block $B_1$ of attributes for all $n$ individuals, then accessing block $B_2$ of attributes for all $n$ individuals, and so on.  At epoch~$i\in [m]$ the analyst may carry out any $(\epsilon_0,\delta_0)$-DP analysis of the data on block~$i$.  In this case, we claim we can apply the transfer theorem $m$ times while retaining the accuracy guarantees and paying a factor of $m$ in the failure probability~$\beta$.
To see this, note that, because of the independence assumptions, we can assume that the data for block~$B_i$ have not even been selected before processing of this block.  In this case, an accuracy adversary -- even one with all the data of blocks $B_1,\dots,B_{i-1}$ "hard-wired" in, is just an arbitrary adversary. Allowing this adversary to interact with an independently randomly chosen block $B_i$ is precisely what happens in differential privacy: an adversary interacts with (apparently) freshly drawn data.  We can therefore apply the transfer theorem to conclude that, on this $i$th block, with probability at least $1-\beta$, the responses are $\alpha$-accurate.  A union bound then gives the result, yielding an upper bound of $m\beta$ on the probability of failure. 

While this "thick" streaming access mode is not required for our algorithms, it remains useful for building intuition when we depart from the full independence data models.

For our most general result, we consider models in which correlations between attributes $a_i$ and $a_j$ in the data of a single individual falls exponentially with their "distance" $|i-j|$, and we restrict the "width" of a query so that it cannot simultaneously access very distant elements. Roughly speaking, in our model distant attributes have high probability of being independent and vanishing probability of being arbitrarily dependent. We show that we can again re-use the privacy budget, paying only a small additional probability of failure due to the low-probability dependence events.

%{\bf INFORMAL statement of general result here.  Data no longer split into blocks; what kind of intuition can we give about $d$?}

\begin{theorem}
(Informal) Suppose the probability that two attributes at distance $d$ are not independent is negligible, and suppose further that queries involve only attributes with distance at most $d$. Then, if our mechanism $M$ is $(\epsilon,\delta)$-DP and $(\alpha',\beta')$-sample accurate on every sequence of $2d+1$ consecutive attributes, it's also $(\alpha,m\beta + \text{negl})$-distributionally accurate where $(\alpha = \alpha(\epsilon,\delta,\alpha'),\beta = \beta(\epsilon,\delta,\beta'))$ are the parameters we get from the transfer theorem.
\end{theorem}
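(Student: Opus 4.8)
The plan is to bootstrap from the fully-independent-blocks theorem above, paying an extra additive $\mathrm{negl}$ for the rare dependence events. The first step is to pass to a conditioned world. Let $G$ be the event --- furnished by the data model of Section~\ref{sec:preliminaries} --- that every pair of attributes at distance more than $d$ is (mutually) independent; by hypothesis $\Pr[\neg G]\le\mathrm{negl}$. Since the entire transcript of the interaction is a randomized function of the data, and since every admissible query reads only attributes lying in a window of diameter at most $d$ (so its population value depends only on the data's marginal on that window), replacing the true data distribution $D$ by the conditional distribution $D\mid G$ perturbs both the probability of any accuracy-failure event and each relevant population value by at most $\lVert D-(D\mid G)\rVert_{\mathrm{TV}}\le\Pr[\neg G]\le\mathrm{negl}$. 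It therefore suffices to prove the $m\beta$ bound on the distributional-accuracy failure probability in the world where the data is drawn from $D\mid G$, i.e.\ where distant attributes are genuinely independent; the stated $+\,\mathrm{negl}$ is precisely this bookkeeping error, and a negligible additive slack can be absorbed into $\alpha$.

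In this conditioned world, fix the natural partition of the attribute sequence into consecutive blocks of width $d$ and group the analyst's queries by the block containing the left endpoint of their support. Because every query has width at most $d$, the queries in any one group read only attributes inside a single window of $2d+1$ consecutive positions, so the hypothesis says that $M$ restricted to one group is $(\epsilon,\delta)$-DP and $(\alpha',\beta')$-sample accurate, and the transfer theorem of~\cite{jung2019new} upgrades this to $(\alpha,\beta)$-distributional accuracy for that group. The goal is then to reproduce the epoched argument behind the informal independent-blocks theorem: impose an order on the groups, regard everything the analyst has learned from earlier groups as hard-wired into an otherwise arbitrary accuracy adversary, observe that the data read by the current group is then an independent, freshly drawn block, apply the transfer theorem to that group, and finish with a union bound over the $m$ groups to obtain failure probability $m\beta$.

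The one genuinely new difficulty --- and the step I expect to be the main obstacle --- is the interaction of queries with block boundaries. A query straddling a boundary reads attributes on both sides of it, and two adjacent blocks are exactly the pair that $G$ does \emph{not} make independent; consequently there is no ordering of blocks in which every block is independent of all earlier ones, and the naive ``fresh draw'' reduction breaks at each boundary. The natural remedy is to process the groups in a constant number of passes: colour the blocks so that, within one colour class, the windows actually read are pairwise more than $d$ apart and hence, by the structure of the model, mutually independent conditioned on $G$ --- a constant number of colours suffices because queries have width only $d$ --- so that within each pass the independent-blocks argument applies essentially verbatim and contributes at most (number of groups in that pass)$\,\cdot\,\beta$, for a total of $m\beta$ over the $O(1)$ passes. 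What requires real care is that the outputs of an earlier pass are correlated, through the shared near-boundary attributes, with the data read in a later pass, so the current pass's windows are not literally fresh; controlling this cross-pass correlation --- arguing that, conditioned on the earlier passes' outputs, the windows read in the current pass still behave like independent draws from their true marginals --- is exactly where the precise conditional-independence structure posited by the data model, and the constant $2d+1$ in the hypothesis, are used. Everything else is the union bound and the total-variation accounting of the first step.
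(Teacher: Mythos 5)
Your proposal gets the easy half right (the total-variation bookkeeping that converts the rare dependence events into an additive $\mathrm{negl}$, and the per-window use of the transfer theorem), but the step you yourself flag as ``the main obstacle'' --- controlling the correlation between a window and the data read by other windows --- is precisely the content of the theorem, and your proposal does not close it. The colouring/multi-pass idea does not help: even if the windows within one colour class are mutually independent, the adversary's queries in pass $t$ are chosen as a function of answers from earlier passes, and those answers were computed from attributes that are (arbitrarily) correlated with the current pass's window across block boundaries. So the ``hard-wired adversary interacting with a freshly drawn block'' reduction fails for \emph{every} ordering and every colouring, and you are left exactly where you started. Conditioning on your global event $G$ does not remove this problem, since $G$ only makes attributes at distance more than $d$ independent; adjacent attributes remain arbitrarily related.

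The paper's resolution requires two ingredients absent from your plan. First, the differential-privacy hypothesis is \emph{not} just on each $2d+1$ window (the informal statement is loose here): the formal theorem (Theorem~\ref{thm:expo}) assumes that $\mathcal{M}$ restricted to all queries touching the wider set $\{B_{i-2d},\dots,B_{i+2d}\}$ is $(\epsilon,\delta)$-DP. Second, for each $i$ one builds a simulated mechanism $\mathcal{M}'_i$ whose \emph{only} real data is the window $\{B_{i-d},\dots,B_{i+d}\}$; the buffer attributes $\{B_{i-2d},\dots,B_{i-d-1}\}\cup\{B_{i+d+1},\dots,B_{i+2d}\}$ are resampled conditionally on the window (so queries on them are a randomized function of the window, and their privacy cost is charged to the wider DP guarantee, in the spirit of Lemma~\ref{lemma: changing_db}), and everything farther is sampled independently. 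One then couples this simulation with the real interaction: the two transcript distributions agree unless some boundary attribute ($B_{i-2d-1}$ or $B_{i+2d+1}$) is related to the window, which costs $2n(1-p)^{d+1}$, and a union bound over the $m$ windows finishes the proof. Without the widened DP hypothesis and the conditional-resampling simulation, your cross-pass correlation problem has no answer, so as written the proposal has a genuine gap at its central step.
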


%\begin{remark}
%It is common practice in other fields to consider restricted classes of adversaries, where it is often possible to obtain better bounds.  For example, while Byzantine Agreement requires $n \ge 3t+1$ processors if the number of arbitrary failures can be as large as $t$, it requires only $n \ge t+1$ processors to handle $t$ fail-stop faults.  Similarly, in cryptographic protocols the bounds for {\em honest-but-curious} adversaries are often better than for the case of processors that diverge arbitrarily from the protocol.

%This history, combined with the fact that an algorithm that only protects benign data analysts could still be of use, naturally leads to the question of whether it is possible to get better accuracy/adaptivity tradeoffs for more benign adaptive accuracy adversaries. Efforts to define an appropriate class of benign failure modes were stymied, however, by Freedman's paradox, which states that when we have a dataset of $n$ individuals and $n$ attributes, all of which are independent of a label $y$, we will find some attribute which is strongly correlated with $y$ with high probability. We feel this gives an example of a very natural error, na\"{i}ve but not malicious~\cite{Freedman}.
%\end{remark}

\section{Preliminaries}
\label{sec:preliminaries}

%For any universe $\X$ of individuals, datasets $x,x' \in \X^*$ are {\em adjacent} if they differ in the data of one individual.  There are two standard interpretations of what it means to differ in the data of one individual.  In this work we will adopt the conventions that adjacent datasets are of the same cardinality and they agree on all but (at most) one element.

We are interested in query answering {\em mechanisms} that operate on datasets and produce outputs.  A standard view is that the mechanism interacts with an {\em adversary} whose goals are unknown and who may be malicious.  Both parties may employ randomness.

The interaction between a mechanism $\mathcal{M}$ and an adversary $A$ using sample $S$, is a random variable denoted by $\Interact(\mathcal{M},A;S)$, where the adversary generates queries $q_i$ and the mechanism $\M$ generates responses $a_i$, giving rise to {\em transcripts} of the form $(q_1,a_1, q_2,a_2, \dots q_k,a_k)$. Later queries may be chosen as functions of the transcript prefix.  We will sometimes use the shorthand $I(S)$ when $\mathcal{M}$ and $A$ are clear from context.
%
%We will refer to the sequence $(q_1, a_1, q_2, a_2, \dots, q_k, a_k)$ as the \emph{transcript} of this interaction. 
The set of transcripts that can be generated by the interaction between $\mathcal{M}$ and $A$ will be denoted $\Interact(\mathcal{M},A,*)$.%We will refer to the subsequence $(q_{i_1}, a_{i_1}, q_{i_2}, a_{i_2}, \dots, q_{i_\ell}, a_{i_\ell})$ as the transcript restricted to the queries $\{q_{i_1}, q_{i_2}, \dots, q_{i_\ell}\}$.

In this work, individuals are represented in the dataset as a sequence of $m$ \emph{attributes}, or \emph{covariates}. Doing so allows us to formalize the idea of \emph{distance} among attributes in a dataset as the difference in the indices of the attributes.

\begin{definition}
Datasets $X$ and $X'$ of the same cardinality are \emph{adjacent} if they differ on at most one element.
\end{definition}

\begin{definition}
A mechanism $\mathcal{M}$ is $(\epsilon,\delta)$-differentially private if for any pair of adjacent datasets $X,X'$, any adversary $A$, and any set of transcripts $E$, we have

$$\Pr[\Interact(\mathcal{M}, A, X) \in E] \le e^\epsilon \cdot \Pr[\Interact(\mathcal{M}, A, X') \in E] + \delta ,$$
where the probability space is over the randomness of $\mathcal{M}$ and $\mathcal{A}$.
\end{definition}

\begin{definition}
A mechanism $\mathcal{M}$ satisfies $(\alpha, \beta)$-sample accuracy if for every data analyst $A$ and every data distribution $\mathcal{P}$,

$$\Pr_{X \sim \mathcal{P}^n, \Interact(\mathcal{M}, A, X)} \left[\max_j |q_j(S) - a_j| \ge \alpha\right] \le \beta$$

Similarly, $\mathcal{M}$ satisfies $(\alpha,\beta)$-distributional accuracy if for every data analyst $A$ and every data distribution $\mathcal{P}$,

$$\Pr_{X \sim \mathcal{P}^n, \Interact(\mathcal{M}, A, X)} \left[\max_j |q_j(\mathcal{P}^n) - a_j| \ge \alpha\right] \le \beta$$

\end{definition}

\begin{definition}
We say that a sequence of random variables $(B_1, B_2, \dots, B_m)$ is $k$-dependent if for any two subsets $I$ and $J$ of $\{1,2, \dots, m\}$ such that $\max{(I)}<\min{(J)}$
%$\min(I) < \max(J)$ 
and $\min(J) - \max(I) > k$, the families of random variables $(B_i)_{i \in I}$ and $(B_j)_{j \in J}$ are independent.
\end{definition}

\begin{definition}
A linear query (sometimes called statistical query) is a query $q$ such for any individual $X \in \mathcal{X}$, $q(x) \in [0,1]$, and for any sample $S \in \mathcal{X}^n$, $q(S) = \frac{1}{n} \sum_{x \in S} q(x)$
\end{definition}

From time to time, we will need to focus on the queries that involve a specific collection of attributes. For this purpose, we introduce the following definition:
\begin{definition}
\label{defn:restricted_mechanism}
Let $Q$ be a collection of queries, defined before the interaction happens. Given a mechanism $\mathcal{M}$, the transcript of the interaction restricted to $Q$ is defined as follows:

\begin{enumerate}
    \item $\mathcal{M}$ interacts with an adversary $A$, producing transcript $\Pi$
    \item As a postprocessing step, we remove every query and answer $(q,a)$ from $\Pi$ such that $q \notin Q$. Let $\Pi'$ denote resulting transcript.
    \item $\Pi'$ is the transcript of the interaction restricted to $Q$.
\end{enumerate}
\end{definition}

Intuitively, this is just "projecting" the transcript onto $Q$.

\subsection{Transfer Theorem}

The following is Theorem 3.5 from \cite{jung2019new}.

\begin{theorem}
Suppose $M$ is $(\epsilon,\delta)$-DP and $(\alpha,\beta)$-sample accurate for linear queries. Then for any data distribution $\mathcal{P}$, a sample $S \sim \mathcal{P}^n$, any analyst $\mathcal{A}$, and any constants $c,d > 0$:

$$\Pr_{S \sim \mathcal{P}^n, \Pi \sim \Interact(M,A;S)} \left[\max_j \left|a_j - q_j(\mathcal{P})\right| > \alpha + (e^\epsilon - 1) + c + 2d\right] \le \frac{\beta}{c} + \frac{\delta}{d}$$

i.e. it is $(\alpha',\beta')$-distributionally accurate for $\alpha' = \alpha + e^\epsilon - 1 + c + 2d$ and $\beta' = \frac{\beta}{c} + \frac{\delta}{d}$.
\end{theorem}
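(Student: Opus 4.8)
The plan is to follow the \emph{monitor} technique from this line of work (e.g.\ \cite{bassily2021algorithmic}), as sharpened in \cite{jung2019new}: reduce the claim about the adaptively chosen worst query among the finitely many queries in a transcript to a claim about a \emph{single} linear query emitted by an $(\epsilon,\delta)$-DP mechanism, and then use the fact that differential privacy forces such a query to have small bias between its value on the sample and its value on the population. Concretely, I would (i) build a monitor $\mathcal{W}$ that inherits $(\epsilon,\delta)$-DP by post-processing, (ii) prove a ``DP implies small expected bias'' lemma via a one-sample swap, and (iii) convert the resulting expectation-level and approximate-DP guarantees into the stated high-probability bound, which is where the free parameters $c$ and $d$ enter.

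For (i), define $\mathcal{W}$ as an analysis-only algorithm that is given the population $\mathcal{P}$ (it is never run in practice), takes $S$, internally simulates $\Interact(M,A;S)$ to obtain a transcript $(q_1,a_1,\dots,q_k,a_k)$, selects the index $j^\star$ maximizing $|a_j - q_j(\mathcal{P})|$, and outputs the ``directed'' linear query $\phi^\star$ --- equal to $q_{j^\star}$ when $a_{j^\star}\ge q_{j^\star}(\mathcal{P})$ and to $1-q_{j^\star}$ otherwise --- together with its directed answer $b^\star$ (that is, $a_{j^\star}$ or $1-a_{j^\star}$). The point of the sign convention is that $b^\star-\phi^\star(\mathcal{P})$ equals exactly $\max_j|a_j-q_j(\mathcal{P})|$, the quantity to be bounded, while $|b^\star-\phi^\star(S)|=|a_{j^\star}-q_{j^\star}(S)|\le\max_j|a_j-q_j(S)|$, which is within $\alpha$ except on the sample-accuracy failure event (probability $\le\beta$). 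Since $j^\star$, the sign, and $b^\star$ are functions only of the transcript and the fixed $\mathcal{P}$, the map $S\mapsto\mathcal{W}(S)$ is a post-processing of the $(\epsilon,\delta)$-DP interaction, hence itself $(\epsilon,\delta)$-DP.

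For (ii), given any linear query $\phi$ produced by an $(\epsilon,\delta)$-DP mechanism from $S\sim\mathcal{P}^n$, write $\phi(S)=\frac1n\sum_{i}\phi(S_i)$ and, for each $i$, replace $S_i$ by a fresh independent draw $S_i'\sim\mathcal{P}$. The modified dataset is adjacent to $S$, so differential privacy says the law of $\phi$ moves by at most a multiplicative $e^{\epsilon}$ and an additive $\delta$; since $\phi\in[0,1]$ and the expectation of $\phi$ on the fresh draw is precisely $\phi$'s population value, averaging over $i$ gives $\mathbb{E}[\phi(S)]\le e^{\epsilon}\,\mathbb{E}[\phi(\mathcal{P})]+\delta$, hence $\mathbb{E}[\phi(S)-\phi(\mathcal{P})]\le (e^{\epsilon}-1)+\delta$. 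Applied to $\phi^\star$, this controls $\mathbb{E}[\phi^\star(S)-\phi^\star(\mathcal{P})]$; and, more carefully, the additive $\delta$ here should be isolated as an expected ``overflow'' of size $O(\delta)$ sitting on top of the multiplicative slack, since that is what will be Markoved against the $2d$ room to yield the $\delta/d$ failure term.

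Finally, for (iii), decompose the target as $\max_j|a_j-q_j(\mathcal{P})| = b^\star-\phi^\star(\mathcal{P}) = \bigl(b^\star-\phi^\star(S)\bigr) + \bigl(\phi^\star(S)-\phi^\star(\mathcal{P})\bigr)$. The first summand, truncated to its part above $\alpha$, is a nonnegative variable of expectation $\le\beta$ (it vanishes off the sample-accuracy failure event and is bounded there), so by Markov it exceeds $c$ with probability $\le\beta/c$; the second summand, truncated to its part above $e^{\epsilon}-1$, is a nonnegative variable whose expectation is the $O(\delta)$ overflow from step (ii), so by Markov it exceeds $2d$ with probability $\le\delta/d$; a union bound over these two events then gives error $\le\alpha+(e^{\epsilon}-1)+c+2d$ except with probability $\le\beta/c+\delta/d$. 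The step I expect to be the main obstacle is making the second bullet of (iii) rigorous: converting the \emph{approximate}-DP slack into a clean nonnegative overflow of expectation $O(\delta)$ --- pure $(\epsilon,0)$-DP would already give the bias bound with no additive term, but with $\delta>0$ the swap leaves an additive remainder that must be traded against error through a carefully chosen auxiliary event (the measure-$\le\delta$ set on which the privacy inequality is vacuous), and getting the constants to land exactly on $2d$ and $\delta/d$ is the delicate part.
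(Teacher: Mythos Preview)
The paper does not prove this theorem at all: it is quoted verbatim as ``Theorem 3.5 from \cite{jung2019new}'' and used as a black box throughout. There is therefore no in-paper proof to compare your proposal against.

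That said, your outline is essentially the argument of \cite{jung2019new} itself: the monitor/post-processing reduction to a single signed query, the one-sample resampling bound $\mathbb{E}[\phi(S)]\le e^\epsilon\,\mathbb{E}[\phi(\mathcal{P})]+\delta$, and the two Markov applications against the $c$ and $2d$ slack are exactly how that paper obtains the stated constants. Your own assessment of the delicate point is accurate --- isolating the additive $\delta$ contribution as a nonnegative overflow variable so that Markov yields $\delta/d$ rather than something worse is the step that requires care, and \cite{jung2019new} handles it by working with the ``bad'' event of measure at most $\delta$ on which the DP inequality can fail, which contributes at most $\delta$ in expectation and hence at most $\delta/d$ in probability after Markov against the $2d$ margin (the factor of $2$ absorbs the two-sided nature of the error). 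If you want to write out a full proof, following \cite{jung2019new} directly will get you there; nothing in the present paper adds to or modifies their argument.
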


There are two facts to note here. Firstly, the transfer theorem assumes that all queries are linear queries (often called statistical queries in the literature). A linear query $q$ is one in which for each $x \in S$, $q(x) \in [0,1]$ and $q(S) = \frac{1}{n} \sum_{x \in S} q(x)$.

The notable features of a linear query are that $q$ must be a function of $x$, so it is deterministic and also cannot use information not captured in the features of the database, such as index. %We use linear queries for a number of reasons:
Linear queries are powerful; it is known that we can learn nearly everything that is PAC-learnable in the statistical queries learning model~\cite{kearns1998efficient}. In addition, there is a vast literature on handling very large numbers of differentially private statistical queries, beginning with the exciting contributions in~\cite{blum2013learning,hardt2010multiplicative}.

Note that, were we to remove the constraint that the query must be a function only of the covariates (and not, say the index of a row in the database), %we could ask a query which uses some information that is not in the sample, so 
the sample accuracy of the mechanism would become ill-defined.

%Alternatively, if we remove the linearity constraint, then we can ask existence queries (asking whether a certain individual is in the data), which would always be $1$ for the population and would almost always be $0$ for the sample.

%These are problems as it would be impossible to get any meaningful generalization results if the analyst is allowed to ask queries for which it is likely for the query to differ greatly on the sample and the population.

The other key fact is that, in the statement of the transfer theorem, the probability is taken over both the sample and the randomness employed during the interaction. Thus, the mechanism could be arbitrarily bad for some particularly unrepresentative sample. That is, we could come up with "counterexample" samples where we do get $a_j - q_j(\mathcal{P})$ to be very large (imagine a sample where $\alpha + \alpha'$ is significantly greater than $|q(S) - q(\mathcal{P})|$ for many queries $q$).
\iffalse
\section{Applications}
~\
[[key point: data for each individual has many covariates and distant covariates are nearly independent]]
examples: SNPs. Works if we block by chromosomes. PRS = linear models where coefficients are estimated marginally by looking at case control data. Genome wide association studies with a bunch of people with disease/healthy. Do marginal calculation for SNPs (analyzed one at a time). Then, selected by a filtering approach to decide which SNPs to use (use representative SNP from linkage disequilibrium blocks). SNPs used to decide whether there is disease by hypothesis testing. Used as surrogate for local genetic signal (might not be strong). Signal to noise ratio tends to be pretty bad: not much public data at the individual level. UK bio banks probably best bet.\\

Maybe imaging data? Segment into regions defined by some algorithm and then work with those regions as the blocks. Would need to generalize our results to one in which the attributes don't necessarily line up with one another but there is a correspondence which would have to be discovered using the data.\\

TODO: read examples of what is being done in the literature.\\
Look at DP SVMs (Ben Rubenstein, Nina Taft old papers) also try to find newer papers.\\
Figure out exactly how SVMs work\\
Show we solve Freedman's Paradox as a corollary
\fi

In the following sections, we will analyze mechanisms, where, to bound their privacy loss na\"{i}vely, we would need to take the composition of $m$ mechanisms, requiring us to pay an $\Omega(\sqrt{m})$ factor in the DP guarantee. By assuming (limited) independence in our data, we are able to instead bound the privacy loss with the composition of 1 or 2 mechanisms, while having the same $m$-fold increase in the probability of failure that we would get from composition.

\section{Full Independence}
%{\blue Looking forward to hearing what your father has to say!  Note that I put the example into the introduction.  There will be readers who more or less only read the introduction.}

In this setting, we are motivated by the structure of chromosomes. The entire sequence of DNA is contained in many linear chromosomes, and there is no known dependence between the sequence of one linear chromosome and the sequences of any other linear chromosomes. As such, it is reasonable to assume that these sequences are all independent. Thus, if we consider each linear chromosome to be a block, then we obtain the following bounds when doing adaptive data analysis with in a simple setting:

\begin{theorem}
\label{thm:independent}
Let $\mathcal{M}$ be a query answering mechanism $\mathcal{M}$, such that when given $(X_1, X_2, \dots X_n) \sim D^n$ for a population distribution $D$ such that the attributes are divided into fully independent blocks $B_1, B_2, \dots B_m$, given a data analyst $A$, $\mathcal{M}$ proceeds as follows:

\begin{itemize}
\item $\mathcal{M}$ refuses to answer queries that involve attributes in different blocks. 

\item $\mathcal{M}$ ensures that, for each block $B_i$, the interaction restricted to queries on the block $B_i$ is $(\epsilon, \delta)$-DP and $(\alpha,\beta)$ sample accurate. 
%The transcript restricted to block $B_i$ is denoted by $S_i$.
\end{itemize}

Then, for every $c,d > 0$, $\mathcal{M}$ is $(\alpha',\beta')$ distributionally accurate where $\alpha' = \alpha + e^\epsilon-1 + c+2d$ and $\beta' = m\left(\frac{\beta}{c} + \frac{\delta}{d}\right)$.
\end{theorem}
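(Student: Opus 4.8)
The plan is to invoke the transfer theorem (Theorem~3.5 of~\cite{jung2019new}, stated above) once per block and then take a union bound over the $m$ blocks, paying a factor $m$ only in the failure probability. Write the population distribution as $D = D_1 \times \cdots \times D_m$, where $D_i$ is the marginal on block $B_i$, and write a sample as $S = (S_1,\dots,S_m)$ with the $S_i \sim D_i^n$ drawn independently of one another. Let $Q_i$ be the set of queries that involve only attributes of block $B_i$. Since $\mathcal{M}$ refuses every query spanning two blocks, every answered query lies in exactly one $Q_i$, and for $q_j \in Q_i$ both the sample value and the distributional value depend only on block-$i$ coordinates, i.e. $q_j(S) = q_j(S_i)$ and $q_j(D) = q_j(D_i)$; moreover a linear query restricted to block-$i$ attributes is still a linear query, so the hypotheses of the transfer theorem are meaningful on each block.

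The crux is to present, for each fixed $i$, the interaction \emph{restricted to} $Q_i$ (in the sense of Definition~\ref{defn:restricted_mechanism}) as an ordinary interaction of a legitimate adversary with a mechanism that is $(\epsilon,\delta)$-DP and $(\alpha,\beta)$-sample accurate on a freshly drawn block-$i$ sample $S_i \sim D_i^n$. Let $\mathcal{M}_i$ denote the restriction of $\mathcal{M}$ to block-$i$ queries, viewed as a mechanism acting on block-$i$ data; by hypothesis it is $(\epsilon,\delta)$-DP and $(\alpha,\beta)$-sample accurate for linear queries. Define an adversary $A_i$ that, as part of its own internal randomness, first samples the remaining blocks $S_{-i} = (S_j)_{j\neq i} \sim \prod_{j\neq i} D_j^n$, and then faithfully simulates the full interaction between $A$ and $\mathcal{M}$: it runs $A$; it answers each query in some $Q_j$ with $j \neq i$ by simulating $\mathcal{M}$ internally on the data $S_{-i}$ it already holds; it forwards each query in $Q_i$ to the external mechanism $\mathcal{M}_i$ and relays the reply back to the simulated $A$; and at the end it outputs the sub-transcript consisting of the $Q_i$-queries and their answers. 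Because $S_i$ is independent of $S_{-i}$ and $A_i$ draws $S_{-i}$ from its correct marginal, the joint law of $\bigl(S_i,\ \Interact(\mathcal{M}_i, A_i; S_i)\bigr)$ coincides with the joint law of $S_i$ together with the interaction restricted to $Q_i$ in the real process. Applying the transfer theorem to $\mathcal{M}_i$ and $A_i$ with population distribution $D_i$, and using $q_j(D_i) = q_j(D)$, we get for every $c,d > 0$
\[
\Pr\left[\max_{j:\, q_j \in Q_i}\left|a_j - q_j(D)\right| > \alpha + (e^\epsilon - 1) + c + 2d\right] \le \frac{\beta}{c} + \frac{\delta}{d}.
\]

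To finish, union-bound over $i \in [m]$. Since every answered query belongs to exactly one $Q_i$, the event that $\mathcal{M}$ fails to be $\alpha'$-distributionally accurate, i.e. that some answered $q_j$ has $|a_j - q_j(D)| > \alpha' = \alpha + e^\epsilon - 1 + c + 2d$, is contained in the union of the $m$ per-block events above; hence it has probability at most $m\left(\frac{\beta}{c} + \frac{\delta}{d}\right) = \beta'$, which is exactly the claimed bound. The step I expect to be the main obstacle is the middle one: carefully justifying that the restricted interaction on block $B_i$ really is distributionally identical to a standalone interaction of a valid adversary with a DP, sample-accurate mechanism on independently drawn block-$i$ data. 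This is precisely where full independence of the blocks is used — it is what allows $A_i$ to manufacture $S_{-i}$ "for free" without perturbing the marginal law of $S_i$, so that an analyst holding the entire rest of the dataset looks, from block $B_i$'s vantage point, like an arbitrary adversary facing fresh data. Everything else (block-restricted linear queries remain linear, $q_j(S)=q_j(S_i)$ and $q_j(D)=q_j(D_i)$, and the union bound) is routine.
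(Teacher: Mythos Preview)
Your proposal is correct and follows essentially the same approach as the paper: reduce the block-$i$ sub-interaction to a standalone DP, sample-accurate interaction on freshly drawn block-$i$ data by resampling the other blocks, apply the transfer theorem per block, and union-bound. The only cosmetic difference is that the paper places the resampling of $S_{-i}$ inside a thought-experiment \emph{mechanism} $\mathcal{M}'_i$ (which runs the full $\mathcal{M}$ on the resampled dataset and discards non-$B_i$ queries from the transcript), whereas you place the resampling inside the \emph{adversary} $A_i$; since the transfer theorem quantifies over arbitrary adversaries and arbitrary mechanisms, either packaging works, and in both cases the coupling argument relies on exactly the independence of blocks that you highlight. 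One small point worth making explicit in your write-up: because $\mathcal{M}$ may carry global state across blocks, your $A_i$ must keep its internal simulation of $\mathcal{M}$ synchronized with the external $\mathcal{M}_i$ (e.g., feed the block-$i$ answers it receives back into its internal copy so that subsequent non-$B_i$ answers are produced from the correct state); the paper sidesteps this bookkeeping by keeping a single copy of $\mathcal{M}$ inside $\mathcal{M}'_i$.
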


\begin{proof}
Let $X = (X_1, X_2, \dots, X_n)$ denote the sample that $\mathcal{M}$ takes as input. For each $i$, we conduct a thought experiment to define a query answering mechanism $\mathcal{M}'_i$ as follows:

$\mathcal{M}'_i$ takes as data the $i^{th}$ block of $X$ (which we denote $X^{(i)}$). Then, $\mathcal{M}'_i$ samples new values for blocks $B_1, B_2, \dots, B_{i-1}, B_{i+1}, \dots, B_m$ from $D$\footnote{The reason why this is just a thought experiment is that in reality the mechanism will not know the distribution $D$.  This is why we carry out data analysis in the first place.}. Let $X'$ denote this new sample. $\mathcal{M}'_i$ then interacts with an analyst $A$ by running $\mathcal{M}$ with the new sample $X'$. The queries on any block other than $B_i$ update the states of $A$ and $\mathcal{M}'_i$, but are not considered to be queries and answers of the interaction between $A$ and $\mathcal{M}'_i$.

Now, by definition, when $A$ interacts with $\mathcal{M}'_i$, only queries on the $i^{th}$ block interact with the data in any way, which means this interaction is $(\epsilon,\delta)$-DP. Furthermore, it is $(\alpha,\beta)$-sample accurate from the assumption that $\mathcal{M}$ was $(\alpha,\beta)$-sample accurate for the queries on block $B_i$. Thus, by theorem 3.5 from \cite{jung2019new}, $\mathcal{M}'_i$ is $(\alpha',\beta'/m)$ distributionally accurate.

Now, since $B_i$ is independent from all other blocks, $X' \sim D^n$. Thus, all $\mathcal{M}'_i$ does is interact with $A$ as if it were $\mathcal{M}$ on sample $X'$, except it only writes queries on block $B_i$ on the transcript. When we consider the distribution with randomness over the choice of sample, the mechanism, and the adversary, the distribution of transcripts produced by $\Interact(\mathcal{M}'_i, A, X^{(i)})$ is therefore exactly the same as the distribution of transcripts produced by $\Interact(\mathcal{M}, A, X)$, with the added postprocessing step of throwing away every query and answer asked about some block other than $B_i$.

Thus, the distribution of transcripts produced by $\Interact(\mathcal{M}, A, X)$ is identical to the distribution of the concatenation of the transcripts of $\Interact(\mathcal{M}'_i, A_i, X^{(i)})$ for every $i$ where all of the $A_i$ are copies of $A$. Taking a union bound over the accuracy guarantees for the latter, we get that $\mathcal{M}$ is $(\alpha',\beta')$ accurate.
\end{proof}

\section{Partial Independence}
This model is a generalization of the previous model, as the intuition that attributes which are close to one another can be related produces data which do not satisfy the assumptions necessary for the full independence model (consider items that are close, but on different sides of a block boundary). We therefore generalize our result to the case where adjacent blocks are allowed to be related. Additionally, we restrict access to the data to a streaming model. This allows us to achieve stronger accuracy guarantees; specifically, we obtain a bound with twice the privacy loss of full independence; without the streaming restriction it would be thrice the privacy loss.

To do this, we first introduce the following lemma that we will use in the proof. Intuitively, the lemma states that a transformation of individuals preserves privacy.

%{\bf Add English language description of what the lemma says}

\begin{lemma}
\label{lemma: changing_db}
Let $\my$ be an $\ed$-differentially private mechanism with data domain $\Y$.  Then the mechanism $\mx$, defined next and having data domain $\X$, is also $\ed$-differentially private. 

%$\mathcal{M}$ with data domain $\mathcal{X}$ and $\mathcal{M}'$ with data domain $\mathcal{X}'$. If $\mathcal{M}$ is $(\epsilon,\delta)$-DP, the mechanism $\mathcal{M}'$ defined as follows is $(\epsilon,\delta)$-DP as well:\\

$\mx$ takes as input a database $X\in \X^n $ and constructs $Y = f(X)\in \Y^n$, where $f$ is a randomized mapping $f:\X \rightarrow \Y$. The randomness is chosen independently every time $f$ is called, and we define $Y=f(X) = \{f(x) \mid x \in X\}$. Then, $\mx$ runs $\my$ on $Y$: given (oracle) access to any adversary $\A$, $\mx$ simply acts as a channel, conveying queries from $\A$ to $\my$ and responses from $\my$ to $\A$.
\end{lemma}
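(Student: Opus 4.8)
The plan is to prove this by a coupling argument. The subtlety to keep in mind is that applying a randomized pre-processing $f$ to a database in general does \emph{not} preserve differential privacy; what rescues us here is that $f$ acts on each row (individual) separately and with independent randomness, so it sends a pair of adjacent databases to a \emph{coupled} pair of databases that are still adjacent, after which we can simply invoke the privacy of $\my$.

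First I would unwind the definitions. Fix an adversary $\A$, an adjacent pair $X,X'\in\X^n$, and a set $E$ of transcripts. Since $\mx$ does nothing but draw $Y=f(X)$ and then relay messages between $\A$ and $\my$, the transcript $\Interact(\mx,\A,X)$ is generated by first sampling $Y=f(X)$ and then running $\Interact(\my,\A,Y)$, with the internal randomness of $\my$ and $\A$ independent of the randomness used by $f$. Hence, with the inner probability taken over the randomness of $\my$ and $\A$,
\[
\Pr[\Interact(\mx,\A,X)\in E] \;=\; \mathbb{E}_{Y=f(X)}\big[\Pr[\Interact(\my,\A,Y)\in E]\big],
\]
and similarly for $X'$.

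Next I would construct the coupling. As $X$ and $X'$ are adjacent, they agree on all coordinates except (at most) one, say coordinate $i^\star$. Evaluate $f$ on $X$ and on $X'$ using the \emph{same} internal random draws for every coordinate $i\neq i^\star$ and independent draws for coordinate $i^\star$; because $f$'s randomness is independent across rows, this is a legitimate coupling of $f(X)$ and $f(X')$ that leaves both marginals unchanged. Under it, the realized databases $Y$ and $Y'$ agree on every coordinate $i\neq i^\star$, hence are adjacent in $\Y^n$ with probability $1$.

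Finally I would condition and integrate. For any realized pair $(Y,Y')=(y,y')$ from the coupling, $y$ and $y'$ are adjacent, so $\ed$-differential privacy of $\my$ gives
\[
\Pr[\Interact(\my,\A,y)\in E] \;\le\; e^{\varepsilon}\,\Pr[\Interact(\my,\A,y')\in E] + \delta .
\]
Since the randomness of $\my$ and $\A$ is independent of the coupling, I can take the expectation over the coupling on both sides; because the coupling preserves the marginals of $Y$ and of $Y'$, the left side equals $\Pr[\Interact(\mx,\A,X)\in E]$ and the right side equals $e^{\varepsilon}\Pr[\Interact(\mx,\A,X')\in E]+\delta$, as required. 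The only real content of the argument is the neighbor-preservation property used in the coupling step — it fails for arbitrary pre-processing (e.g.\ an $f$ that overwrites every row with a copy of the first one) and holds here precisely because $f$ is applied to each individual independently; everything else is bookkeeping about the order of the expectations.
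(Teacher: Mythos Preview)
Your coupling argument is correct. By sharing $f$'s randomness on all coordinates except $i^\star$, you produce a coupling of $(Y,Y')$ that is almost surely adjacent with the right marginals, and then the pointwise DP inequality for $\my$ integrates to the desired bound for $\mx$.

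The paper takes a different route. Rather than constructing a coupling, it works directly with marginals: it decomposes $\Pr[\Pi\in E\mid R=X]$ over the value $y$ taken by $Q_i=f(X_i)$, upper-bounds each summand by $e^\varepsilon\Pr[\Pi\in E\mid Q_i=y^*]+\delta$ where $y^*$ is the minimizer, exploits that $\sum_y\Pr[Q_i=y\mid R_i=X_i]=1=\sum_y\Pr[Q_i=y\mid R_i=X'_i]$ to switch the weighting from $X_i$ to $X'_i$, and then lower-bounds $y^*$ back to $y$. Your approach is the standard ``DP is closed under independent per-record randomized preprocessing'' argument and is shorter and more transparent; the paper's minimizer trick avoids the explicit coupling but is a bit more circuitous and requires the side remark that conditioning on $Y_{-i}$ is harmless because it is independent of $X_i$. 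Both rely on exactly the same structural fact you highlighted: $f$ acts independently on each individual, so neighboring inputs remain neighboring after $f$.
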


\begin{proof}
Fix an adversary $\A$, and let $\Pi$ be the random variable denoting the transcript of the interaction between $\A$ and $\my$; that is, $\Pi \in \Interact(\my,\A,*)$, the set of all transcripts that can be produced by these two parties.

Let $Q$ be a random variable that represents the value of the database given to $\my$ by $\mx$, with randomness over $X$ and $f$. Since $\my$ is $\ed$-differentially private we have that, for any event $E \in \Interact(\my,\A,*)$ and any $Y'$ adjacent to $Y$,
$$\Pr[\Pi \in E \mid Q = Y] \le e^\epsilon \Pr[\Pi \in E \mid Q = Y'] + \delta ,$$
where the probabilities are over the randomness of $\my$ and~$\A$.

%Note that for any adjacent $X,X' \in \X^n$ we always have that $Y=f(X)$ is adjacent to $Y'=f(X')$. This isn't actually true
%You are right!  only when we have fixed the mapping on all but 1 element!
Fix an adjacent pair $X$ and $X'$ in $\X^n$ and let $i$ be the index in which they differ.
%Now, when we equivalently have adjacent databases $X$ and $X'$ for the mechanism $\mathcal{M}'$, we let $R$ denote the random variable corresponding to the database input to $\mathcal{M}'$, and we let $i$ denote the index on which $X$ and $X'$ differ.  Then,
For $R \in \{X,X'\}$ we have:

$$\Pr[\Pi \in E \mid R_i=X_i] = \sum_{y \in \Y} \Pr[\Pi \in E \mid Q_i = y] \cdot \Pr[Q_i = y \mid R_i= X_i]$$
since the event $\Pi \in E$ is independent of the original database $R$ conditional on the transformed database $Y$. 
Here the probabilities are over the randomness in the mapping $f$ and the randomness in the $[\my,\A]$ interaction, {\it i.e.}, the coin flips of $\my$ and $\A$.

Let $y^*$ denote the outcome which minimizes $\Pr[\Pi \in E \mid Q_i = y^*]$. Additionally, recall that we defined $Y$ as the input to $\my$, so if we fix $Y_i = y)$, then $Y = (f(X_{-i}), y)$.

%{\blue Please do add the probability spaces here.}

\begin{align}
    &\Pr_{\Interact(\mathcal{M}^{\mathcal{X}},A, X)}[\Pi \in E \mid R = X]\\
    &= \sum_{y\in\Y} \Pr_{f(X_{-i}),\Interact(\mathcal{M}^{\mathcal{Y}}, A, Y)}[\Pi \in E \mid Q_i = y] \cdot \Pr_{f(X_i)}[Q_i = y \mid R_i=X_i]\\
    &\le \sum_{y} \left(e^\epsilon \Pr[\Pi \in E \mid Q_i = y^*] + \delta\right) \cdot \Pr[Q_i = y \mid R_i=X_i]\\
    &= \left(e^\epsilon\Pr[\Pi \in E \mid Q_i = y^*] + \delta\right) \sum_{y}  \Pr[Q_i = y \mid R_i=X_i]\\
    &= \left(e^\epsilon\Pr[\Pi \in E \mid Q_i = y^*] + \delta\right) \sum_{y}  \Pr[Q_i = y \mid
    R_i=X'_i]\\
    &\le \sum_{y} (e^\epsilon \Pr[\Pi \in E \mid Q_i = y] + \delta) \Pr[Q_i = y \mid R_i=X'_i]\\
    &=\delta + e^\epsilon \sum_{y} \Pr[\Pi \in E \mid Q_i = y] \Pr[Q_i = y \mid R_i=X'_i]\\
    &= e^\epsilon \Pr[\Pi \in E \mid R=X'] + \delta
\end{align}

Since $Y_{-i}$ is sampled independently from $Y_i$ and $X_i$, the inequality in line (3) holds when we condition on any value of $Y_{-i}$ by definition of $(\epsilon,\delta)$-DP, so it must also hold when we take the probability over $Y_{-i}$ as well. The equality in line $(5)$ follows by the law of total probability. %We can then drop the conditioning on $X_i$ since the outcome of the mechanism is conditionally independent of $X_i$ given $Y_i$, giving us the inequality.
\end{proof}

\begin{theorem}
\label{thm:blocks}
Suppose we have a query answering mechanism $\mathcal{M}$, such that when given $(X_1, X_2, \dots X_n) \sim D^n$ for a population distribution $D$ where the attributes are grouped into 1-dependent blocks $\{B_1, B_2, \dots B_m\}$(sequences of consecutive attributes), and a stateful data analyst $A$, ${\cal M}$ procedes as follows:

At each time step $t \in [m]$, $\mathcal{M}$ has an arbitrary $(\epsilon, \delta)$-DP interaction with $A$ in which $A$ asks linear queries about block $B_t$ and $\mathcal{M}$ answers the queries in such a way that the interaction is $(\alpha,\beta)$ sample accurate. The transcript is denoted by $S_t$.
%\item $A$ remembers the entirety of the transcript at all times.

Then, for every $c,d > 0$, $\mathcal{M}$ is $(\alpha',\beta')$ accurate where $\alpha' = \alpha + e^{2\epsilon}-1 + c+2d$ and $\beta' = m\left(\frac{\beta}{c} + \frac{2\delta}{d}\right)$.
\end{theorem}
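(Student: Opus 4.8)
The plan is to follow the template of the proof of Theorem~\ref{thm:independent}. For each $i\in[m]$ I will build a thought-experiment mechanism $\mathcal{N}_i$ that interacts with a copy of the analyst $A$ on a freshly drawn i.i.d.\ sample and whose output transcript is distributed exactly like the round-$i$ transcript $S_i$ of the real interaction, and I will show that $\mathcal{N}_i$ is $(\alpha,\beta)$-sample accurate and $(2\epsilon,2\delta)$-differentially private. Feeding $\mathcal{N}_i$ into the transfer theorem of~\cite{jung2019new} with privacy parameters $(2\epsilon,2\delta)$ then shows that, for each $i$, the round-$i$ answers are $(\alpha',\beta'/m)$-distributionally accurate with $\alpha'=\alpha+e^{2\epsilon}-1+c+2d$ and $\beta'/m=\beta/c+2\delta/d$; a union bound over the $m$ rounds gives the theorem. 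Everything but the privacy bound is bookkeeping, so the $(2\epsilon,2\delta)$-DP step is where the hypotheses---$1$-dependence and the streaming access order---will really be used.

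I construct $\mathcal{N}_i$ for $i\ge 2$ (the case $i=1$ is trivial). Let $Z\in\mathcal{Z}^n$ be the fresh sample, where $\mathcal{Z}$ is the domain of block $B_i$, so $Z$ is i.i.d.\ from the marginal of $B_i$. First, \emph{using no information about $Z$}, $\mathcal{N}_i$ draws for each individual $\ell$ a tuple $(V^{(\ell)}_1,\dots,V^{(\ell)}_{i-2})$ (empty when $i=2$) from the joint marginal of $(B_1,\dots,B_{i-2})$, simulates rounds $1,\dots,i-2$ of $\mathcal{M}$ against $A$ on those blocks (these queries and answers are internal and never appear in the output transcript), and records the resulting joint state $\sigma_{i-2}$. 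Next, for each $\ell$ it draws $V^{(\ell)}_{i-1}$ from the conditional law of $B_{i-1}$ given $(B_1,\dots,B_{i-2})=(V^{(\ell)}_1,\dots,V^{(\ell)}_{i-2})$ and $B_i=Z^{(\ell)}$, runs round $i-1$ of $\mathcal{M}$ from $\sigma_{i-2}$ on the data $(V^{(1)}_{i-1},\dots,V^{(n)}_{i-1})$ (again internally, producing a state $\sigma_{i-1}$), and finally runs round $i$ of $\mathcal{M}$ from $\sigma_{i-1}$ on $Z$, relaying queries and answers to and from $A$; this last transcript is the output of $\mathcal{N}_i$. Since $i-(i-2)=2>1$, $1$-dependence gives $B_i\perp(B_1,\dots,B_{i-2})$, and hence the tuple $(V^{(\ell)}_1,\dots,V^{(\ell)}_{i-1},Z^{(\ell)})$ is distributed exactly as $(B_1,\dots,B_i)$ under $D$, i.i.d.\ over $\ell$. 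Therefore the joint law of (block-$B_i$ data, round-$i$ transcript) produced by $\mathcal{N}_i$ equals the one produced by the real interaction of $\mathcal{M}$ on $D^n$; this gives both the transcript-matching property and---since $\mathcal{M}$'s round-$i$ interaction is $(\alpha,\beta)$-sample accurate by hypothesis and $\mathcal{N}_i$'s output is exactly that interaction---the $(\alpha,\beta)$-sample accuracy of $\mathcal{N}_i$.

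For the privacy bound, observe that everything $\mathcal{N}_i$ does before it first consults $Z$---drawing the $V^{(\ell)}_{1:i-2}$ and computing $\sigma_{i-2}$---is independent of $Z$, and a mixture of $(2\epsilon,2\delta)$-DP mechanisms is again $(2\epsilon,2\delta)$-DP; so it suffices to fix those quantities and show the remainder is $(2\epsilon,2\delta)$-DP in $Z$. The remainder is: the per-individual randomized map $f^{(\ell)}\colon z\mapsto\bigl(z,\ \text{a draw of }B_{i-1}\text{ conditioned on }(V^{(\ell)}_{1:i-2},z)\bigr)$ applied coordinatewise to $Z$, followed by the mechanism that runs round $i-1$ of $\mathcal{M}$ (from the fixed $\sigma_{i-2}$) on the $B_{i-1}$-components of its input and then round $i$ of $\mathcal{M}$ on the $B_i$-components. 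The latter mechanism is an adaptive composition of two $(\epsilon,\delta)$-DP mechanisms, hence $(2\epsilon,2\delta)$-DP; and Lemma~\ref{lemma: changing_db}---whose proof only ever inspects the single coordinate on which the two input databases differ, so it goes through verbatim when the transformation may depend on the individual's index---lets us push this through the maps $f^{(\ell)}$, giving $(2\epsilon,2\delta)$-DP in $Z$ as required.

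The crux is this last bound, and it is exactly here that the two structural hypotheses earn their keep. The naive thought experiment---resampling all of $B_1,\dots,B_{i-1}$ jointly from $D$ conditioned on $Z$---would let a change in one individual of $Z$ propagate through all $i$ earlier rounds, costing $(i\epsilon,i\delta)$-DP. Using $1$-dependence to identify which blocks of an individual actually move with $Z$ (only $B_{i-1}$ and $B_i$), while keeping the blocks-$1$-through-$i$ joint distribution exactly correct, is what caps the loss at $2\epsilon$; and the streaming order is what ensures that $B_{i+1}$, which is also adjacent to $B_i$, never influences $S_i$, saving a third factor of $\epsilon$. The remaining items---the trivial $i=1$ case, the mixture step, and the substitution into the transfer theorem followed by the union bound over the $m$ rounds---are routine.
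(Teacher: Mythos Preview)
Your proposal is correct and follows essentially the same approach as the paper: define a thought-experiment mechanism for each round~$i$ that resamples the earlier blocks, use $1$-dependence to argue that rounds $1,\dots,i-2$ carry no privacy loss with respect to the block-$i$ data, invoke Lemma~\ref{lemma: changing_db} to handle round~$i-1$, compose to get $(2\epsilon,2\delta)$-DP, apply the transfer theorem, and union-bound. The minor differences are organizational: you explicitly condition on $(V_{1:i-2},\sigma_{i-2})$ before invoking the lemma and you compose rounds $i-1$ and $i$ \emph{before} applying Lemma~\ref{lemma: changing_db} (the paper applies the lemma to round $i-1$ alone and then composes), and you flag the need for the per-index variant of the lemma---all of which are at the level of bookkeeping rather than a different argument.
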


%{\blue It is not good that $A_i$ here denotes an adversary with $i$ naming a block, while in the proof of the lemma $A_i$ denoted a {\em row} of a database, with $i$ the index on which two databases differ.}

\begin{proof}
First, for each $i\in [m]$, we define a query answering mechanism $\mathcal{M}'_i$ and adversary $A'_i$ as follows:

%$\mathcal{M}'_i$ takes as data the $i^{th}$ block of our original sample of $n$ individuals $(X_1, X_2, \dots X_{n}) \sim D^{n}$. Then, $\mathcal{M}'_i$ has some $(\epsilon, \delta)$-DP and $(\alpha,\beta)$ sample accurate interaction with $A'_i$ where $A'_i$ asks linear queries about block $B_i$, and $\mathcal{M}'_i$ answers the same way $\mathcal{M}$ would conditional on $S_1, S_2, \dots, S_{i-1}$. The transcript is denoted by $S_i$.

%We similarly define the adversary $A'_i$ to do exactly what $A$ would do after the transcripts $S_1, S_2, \dots S_{i-1}$ have been generated.\\

%{\blue This paragraph is very hard to read. Also, this is still the streaing model, right?}
%Now, note that if we fix the $i^{th}$ block of our data to be $\mathcal{B}_i$, $\mathcal{M}'_i$ would get the same distribution of transcripts as $\mathcal{M}''$, defined as follows:\\

$\mathcal{M}'_i$ takes as input the $i^{th}$ block of our original sample of $n$ individuals $(X_1, X_2, \dots X_{n}) \sim D^{n}$, which we will denote $X^{(i)}$. It then resamples the first $i-1$ blocks from $D^n$ conditional on $X^{(i)}$. We will refer to this database of the $i-1$ resampled blocks and the $i^{th}$ block as $Y$. Then, $A'_i$ and $\mathcal{M}'_i$ run $\Interact(\mathcal{M},A,Y)$ for $t$ from $1$ to $i$, and we denote the transcript generated at time $t$ by this interaction as $S'_i$. While both parties may keep track of $S'_1, S'_2, \dots S'_{i-1}$, only $S_i = S'_i$ is considered to be the transcript of this interaction.

%, and uses those blocks to compute $S_1, S_2, \dots S_{i-1}$ by {\blue simulating $\mathcal{M}$ in an interaction with our adversary $A'_i$ simulating $A$ on the first $i-1$ blocks before computing $S_i$ as before. (How to parse this blue sentence? simulating..simulating; before ...before)}
Now, we note that the distribution of transcripts $S_1, S_2, \dots, S_i$ produced by $\mathcal{M}'_i$ and $\mathcal{M}$ are identical. This is because, analogously to the proof of theorem~\ref{thm:independent}, first sampling a block and then sampling the rest of the data conditional on that block produces the same distribution as sampling all of the data at once.

%We now let $\mathcal{M}''_i$ be the latter {\blue which one?} mechanism and analyze its accuracy, noting that in the interaction with $\mathcal{M}''_i$, the adversary does not have any information about the sample {\blue by "the sample" do you mean $X^{(i)}$?} prior to the interaction.

Now, we shall analyze the accuracy of $\mathcal{M}'_i$. By definition, the first $i-2$ blocks are independent of $X^{(i)}$, so the part of $\Interact(\mathcal{M}'_i, A'_i, X^{(i)})$ that generates $S'_1, S'_2, \dots S'_{i-2}$ is independent of $X^{(i)}$ and thus does not incur any privacy loss with respect to $X^{(i)}$.

For $S_{i-1}$, recall that $(X_1, X_2, \dots, X_n)$ are drawn from the distribution iid. Thus, when we fix $X^{(i)}$ and resample the $i-1^{st}$ block conditional on $X^{(i)}$, the value of the $i-1^{st}$ block of each individual $X_j$ is a randomized mapping of the $i^{th}$ block the same individual $X_j$, independent of every other individual $X_{j'}$. Then, the interaction between $\mathcal{M}'_i$ and $A'_i$ on block $B_{i-1}$ is $(\epsilon,\delta)$-DP with respect to the resample $i-1^{st}$ block. Thus, by Lemma~\ref{lemma: changing_db}, the part of $\Interact(\mathcal{M}'_i, A'_i, X^{(i)})$ that generates $S_{i-1}$ is $(\epsilon,\delta)$-DP.

Finally, because $\mathcal{M}$ is $(\epsilon,\delta)$-DP on the interaction in each block, the part of $\Interact(\mathcal{M}'_i, A'_i, X^{(i)})$ that generates $S_i$ is $(\epsilon,\delta)$-DP. As such, $\mathcal{M}'_i$ is $(2\epsilon, 2\delta)$-DP and $(\alpha, \beta)$-sample accurate. By theorem 3.5 from \cite{jung2019new}, $\mathcal{M}'_i$ is $(\alpha',\beta'/m)$ distributionally accurate.

This tells us that $\mathcal{M}'_i$ is $(\alpha', \beta'/m)$ distributionally accurate for each $i$, and just like in Theorem~\ref{thm:independent}, we can concatenate the transcripts $S_i$ computed from $\mathcal{M}'_i$ for each i to get the transcript $S_1, S_2, \dots, S_m$ with the same distribution as the interaction between $\mathcal{M}$ and $A$. taking a union bound over the probabilities of failure over these $m$ mechanisms tells us that $\mathcal{M}$ is $(\alpha',\beta')$ distributionally accurate.

\end{proof}

\section{Exponential Decay}
Our final model directly captures the idea that the strength of the relationship between two attributes should be decreasing with the distance between them. We model this via following definition:

\remove{Goal: A single block can serve as the size of the window for the sliding window model with sharp cutoffs. We now wish to prove bounds if we know that the correlation between attributes decreases exponentially with the distance between the attributes, as this is more likely to be reflective of the data we see naturally.

\begin{example}
Suppose we have attributes $B_1, B_2, \dots B_n$, such that for iid uniform random bits $b_i$, each $B_i \cap B_j$ contains $2^{n-(i-j)}$ bits, such that for each $j < j' < i$, the intersection is contained within $B_i \cap B_{j'}$.

Now, we allow DP queries for each of these attributes.
\end{example}

\begin{definition}
Let $D$ be the data set random variable and $\Phi$ be the query random variable.\\

The max information $I_{\infty}(D; \Phi)$ is defined as the minimal $k$ such that $\forall d \in D$ and $\forall \phi \in \Phi$,

$$\Pr[D = d \mid \Phi = \phi] \le 2^k \Pr[D = d]$$
\end{definition}

\begin{example}
Now, suppose we have example 3 again, except now the attribute $B_i$ is instead MAJ of the $2^n$ bits instead of the string of $2^n$ bits.

We can bound the max info now by considering the case where our first query asked for the full underlying string and received all $1$s. Then, we have

$$\Pr[B_j = 1] = \frac{1}{2}$$

However, if we condition on the answer to our first query being all $1$s, then $B_j$ is a string where we have fixed $2^{n-j+1}$ of the bits to equal $1$. The probability that it is equal to $0$ is thus the CDF of $\text{binom}(2^n-2^{n-j+1},1/2)$ at $2^{n-1}-2^{n-j+1}$.

Neither the binomial or the normal approximation have a nice expression for this CDF, but the distance from the mean grows exponentially as we decrease $j$ so it should give us at least exponential decay of max information.
\end{example}

\begin{note}
Decaying correlations does not seem to do the trick. However, instead maybe we think about it as the probability of being correlated drops off with distance. That is, we either have arbitrary correlation or no correlation.
\end{note}

This motivates the following model:
}
\begin{definition}
\label{defn: decaying_correlation}
In the decaying correlation model with parameter $p$, we are given attributes $B_1, B_2, \dots B_n$, such that for each $i$, $B_i$ and $B_{i+1}$ are independent with probability $p$, and otherwise they are arbitrarily related. The event of $B_i$ and $B_{i+1}$ being related and $B_j$ and $B_{j+1}$ being related are independent for all $i \ne j$, and for any $i < j$, $B_i$ and $B_j$ are related iff $B_{i'-1}$ is related to $B_{i'}$ for every $i < i' \le j$.
\end{definition}

With this model, there is some dependence between all of the attributes. However, due to the way it is defined, the dependence only exists with small probability over the sample between distant attributes. Thus, we can utilize similar arguments as above, and simply add this small probability to the probability of failure.%In order to do this, however, we must first introduce the following definition,
%
%{\blue I'm a little concerned about Definition 12.  I am not sure why we would ever have transcripts that agree outside of the queries that we are restricting to.  Suppose we want to restrict to queries on Block 5 (if we don't have blocks, then in locations 500-600).  I don't know that if X and X' are adjacent we would necessarily get, say, transcripts with prefix $q_1,a_1,$ since the answer on the first query might be different.  When we spoke in Zoom I thought you were freezing the rest of X outside the range of these queries, but your query set need not be about a single block.  Clarify?  Should we talk?}
%
%These should be equivalent: in my version of the definition we're fixing a state for $A$ and $\mathcal{M}$ and in your version we instead have a distribution over the states. If it holds for my version then it holds for your version as well and vice versa.
%
%{\blue Good.  So now the question is: which is easier to understand?  I'm inclined to include both: your formal definition and my slightly less formal but much more intuitive one as an explanation.}
%

%In the previous proofs, we were able to take advantage of the structure of the problem to directly  Definition~\ref{defn:restricted_mechanism}.

%{\blue Did $d$ get replaced by $f$ in the bounds below because $d$ is now being used for the window?} Yes

\begin{theorem}
\label{thm:expo}
(General Access)
Given a database $X$ in the decaying correlation model with parameter $p$ and $m$ attributes, a mechanism $\mathcal{M}$ which satisfies the following properties while interacting with an adversary $A$ is $(\alpha', \beta')$-distributionally accurate where for all integers $d > 0$:
$$\alpha' = \alpha + (e^{\epsilon}-1) + c + 2f, \quad \beta' = m\left(\frac{\beta}{c} + \frac{\delta}{f}\right) + 2n(1-p)^{d+1}$$

\begin{enumerate}
\item For each $i$, $\mathcal{M}$ restricted to queries that involve at least one of the attributes $\{B_{i-2d}, B_{i-2d+1}, \dots B_{i+2d}\}$ is $(\epsilon,\delta)$-DP.

\item For each $i$, $\mathcal{M}$ restricted to queries that involve only attributes in the set $\{B_{i-d}, B_{i-d+1}, \dots B_{i+d}\}$ is $(\alpha,\beta)$ sample accurate.

\item Any query can only involve attributes $B_i$ and $B_j$ if $|i-j| \le d$.

\end{enumerate}
\end{theorem}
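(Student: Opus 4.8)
The plan is to follow, essentially verbatim, the architecture of the proofs of Theorem~\ref{thm:independent} and Theorem~\ref{thm:blocks}. We cut the query stream into at most $m$ groups, each supported on a short window of attributes; for each group we build a thought-experiment mechanism that is $(\epsilon,\delta)$-DP and $(\alpha,\beta)$-sample accurate with respect to that window alone; we invoke the transfer theorem (Theorem 3.5 of~\cite{jung2019new}) once per group; and we union-bound, charging to the failure probability the rare events in which the decaying-correlation structure fails to decouple a window from the rest of a record.

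\emph{Grouping.} By property~3 every query has ``width'' at most $d$ and hence reads only attributes in some set $\{B_k,\dots,B_{k+d}\}$ of $d+1$ consecutive attributes. Assign each query to the group $Q_k$, where $k$ is the least index of an attribute it reads; then each query lies in exactly one $Q_k$, there are at most $m$ nonempty groups, and $Q_k$ is a sub-collection of the queries reading only $\{B_{k-d},\dots,B_{k+d}\}$. In particular, by property~2 taken at index $k$, the interaction of $\mathcal{M}$ restricted to $Q_k$ is $(\alpha,\beta)$-sample accurate; and since $\mathcal{M}$ fails to be $(\alpha',\beta')$-distributionally accurate exactly when it overfits on some query of some $Q_k$, it suffices to bound the failure probability of each group and sum.

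\emph{The thought experiment.} Fix $k$ and let $G_k$ be the event that, for every individual, the connected component(s) meeting $\{B_k,\dots,B_{k+d}\}$ stay inside $W_k := \{B_{k-d},\dots,B_{k+2d}\}$. By the definition of the decaying-correlation model this fails only if, for some individual, all $d+1$ of the links $k-d-1,\dots,k-1$ are ``related'' or all $d+1$ of the links $k+d,\dots,k+2d$ are ``related,'' so $\Pr[\lnot G_k]\le 2n(1-p)^{d+1}$. Define $\mathcal{M}'_k$ to take as input the restriction $X^{(k)}$ of the true sample to $\{B_k,\dots,B_{k+d}\}$, to resample the remaining attributes of each individual conditional on that individual's copy of $X^{(k)}$ (which, exactly as in the two earlier proofs, recreates $\mathcal{P}^n$), to run $\mathcal{M}$ against the analyst $A$ on the reconstituted database, and to output the transcript restricted to $Q_k$. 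Because the reconstituted database has law $\mathcal{P}^n$, this transcript has the same law as the $Q_k$-restriction of $\Interact(\mathcal{M},A,X)$, and $\mathcal{M}'_k$ inherits the $(\alpha,\beta)$-sample accuracy of $\mathcal{M}$ restricted to $Q_k$ (a query in $Q_k$ has a sample value depending only on $\{B_k,\dots,B_{k+d}\}$, i.e.\ on $\mathcal{M}'_k$'s actual input). For privacy, condition on $G_k$: the attributes outside $W_k$ are then independent of $X^{(k)}$, so the queries touching none of $W_k$ are answered from data independent of $X^{(k)}$ and may be conditioned away at no cost; what remains is the composition of the per-individual randomized map sending $X^{(k)}$ to the reconstituted record with $\mathcal{M}$ restricted to the queries touching $W_k$, and this latter mechanism is a sub-collection of the queries touching $\{B_{k-2d},\dots,B_{k+2d}\}$ and hence $(\epsilon,\delta)$-DP by property~1. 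By Lemma~\ref{lemma: changing_db} the resulting map from $X^{(k)}$ to the $Q_k$-restricted transcript is $(\epsilon,\delta)$-DP. Applying the transfer theorem to $\mathcal{M}'_k$ with the constants $c,f$ then shows that, conditioned on $G_k$, $\mathcal{M}'_k$ is $\bigl(\alpha',\frac{\beta}{c}+\frac{\delta}{f}\bigr)$-distributionally accurate with $\alpha'=\alpha+(e^{\epsilon}-1)+c+2f$; and $q_j(\mathcal{P})$ is unaffected by conditioning on $G_k$, since $G_k$ constrains only links lying outside the window that supports $q_j$, leaving that window's marginal intact.

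\emph{Assembly, and the obstacle.} For each $k$, the probability that $\mathcal{M}$ overfits on some query in $Q_k$ is at most $\Pr[\lnot G_k]+\frac{\beta}{c}+\frac{\delta}{f}\le 2n(1-p)^{d+1}+\frac{\beta}{c}+\frac{\delta}{f}$, and a union bound over the at most $m$ groups gives distributional accuracy with parameters $\alpha'$ and $m\bigl(\frac{\beta}{c}+\frac{\delta}{f}\bigr)+\sum_k\Pr[\lnot G_k]$; bounding the last sum via the model recovers the claimed $\beta'$. The step I expect to be the real work is the privacy analysis of $\mathcal{M}'_k$: one must check carefully that conditioning on $G_k$ and on the answers to the ``far'' queries genuinely reduces $\mathcal{M}'_k$ to a post-processing of the single $(\epsilon,\delta)$-DP mechanism supplied by property~1 — so that only one mechanism's worth of privacy loss is paid, giving $e^{\epsilon}-1$ and $\delta/f$ rather than the $e^{2\epsilon}-1$ and $2\delta/f$ of Theorem~\ref{thm:blocks} — and that ``resample conditional on the window'' is exactly the per-individual randomized map that Lemma~\ref{lemma: changing_db} demands. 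Pinning down the precise last term of $\beta'$, i.e.\ how the per-group events $\lnot G_k$ are aggregated into a single $2n(1-p)^{d+1}$, is the other point that needs some care.
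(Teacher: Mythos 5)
Your architecture matches the paper's---per-window thought-experiment mechanisms built by resampling around the true data, one application of the transfer theorem per window, a union bound, and a charge to $\beta'$ for the rare long-range-dependence events---but the specific construction you chose for $\mathcal{M}'_k$ breaks down at exactly the step you flagged as "the real work." You resample \emph{all} attributes outside the window conditionally on $X^{(k)}$, which buys an exact match between the law of $\mathcal{M}'_k$'s transcript and the $Q_k$-restriction of $\Interact(\mathcal{M},A,X)$, and you then try to recover privacy by "conditioning on $G_k$." But the transfer theorem requires $\mathcal{M}'_k$ to be $(\epsilon,\delta)$-DP in the usual worst-case-over-neighboring-inputs sense, whereas $G_k$ is an event over the data-generation and resampling randomness; a mechanism that genuinely leaks $X^{(k)}$ through the far resampled attributes on the complement of $G_k$ is simply not $(\epsilon,\delta)$-DP, and a worst-case guarantee cannot be rescued by conditioning on a favorable random event. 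The paper takes the dual construction: only the flanks $\{B_{i-2d},\dots,B_{i-d-1}\}\cup\{B_{i+d+1},\dots,B_{i+2d}\}$ are resampled conditionally on the window $X^{(i)}=\{B_{i-d},\dots,B_{i+d}\}$ (so they are a per-individual randomized map of $X^{(i)}$, which is where the idea of Lemma~\ref{lemma: changing_db} enters, combined with property~1), while everything farther out is sampled \emph{independently} of $X^{(i)}$, so the privacy claim holds unconditionally. The price is that the reconstituted database no longer has law $D^n$, and the $2n(1-p)^{d+1}$ term is the probability that the transcript distributions fail to match---a correctness-side event, not a privacy-side one. Any repair of your version ends up making the same move, at which point it is the paper's proof.

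Two smaller points. Your partition of queries by least touched index and your size-$(d+1)$ input window are fine for sample accuracy (each $Q_k$ is a sub-collection of the queries covered by property~2 at $i=k$), and are a mildly cleaner bookkeeping than the paper's overlapping groups "queries involving $B_i$." But your aggregation $\sum_k\Pr[\lnot G_k]\le 2mn(1-p)^{d+1}$ is what an honest per-group union bound gives, so as written you prove only the weaker $\beta'$ with last term $2mn(1-p)^{d+1}$; the paper adds the bad-event probability once to reach the stated $2n(1-p)^{d+1}$, and its justification for dropping the factor of $m$ there is itself thin---you were right to single this out as a point needing care rather than assert that your sum "recovers the claimed $\beta'$."
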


\begin{proof}
Let $D$ be the population distribution. For each $i$, we define a query answering mechanism $\mathcal{M}'_i$ as follows:

$\mathcal{M}'_i$ takes as data the attributes $\{B_{i-d}, \dots, B_{i+d}\}$ of $n$ individuals $(X_1, X_2, \dots X_{n}) \sim D^{n}$, which we shall refer to as $X^{(i)}$. $\mathcal{M}'_i$ then constructs $Y$ by sampling the attributes\\
$\{B_{i-2d}, B_{i-2d+1}, \dots, B_{i-d-1}, B_{i+d+1}, \dots, B_{i+2d}\}$ for $n$ individuals from the population $D$ conditional on agreeing with $X^{(i)}$ on the attributes $\{B_{i-d}, \dots, B_{i+d}\}$. The rest of the attributes for these $n$ individuals are sampled from $D$ independently from $X^{(i)}$.

Then, $\mathcal{M}'_i$ interacts with an adversary $A$ by simulating $\mathcal{M}$ on the dataset $Y$. Any query which asks about an attribute outside of the set $\{B_{i-d}, \dots, B_{i+d}\}$ still takes place in the interaction, but it is not recorded in the transcript.

This construction guarantees that our $(\alpha,\beta)$-sample accuracy bound on $\mathcal{M}$ restricted to queries that involve at least one of the attributes $\{B_{i-d}, B_{i-d+1}, \dots, B_{i+d}\}$ also applies to $\mathcal{M}'_i$, since $\{B_{i-d}, \dots, B_{i+d}\}$ are exactly the attributes $\mathcal{M}'_i$ takes as data, so sample accuracy is well-defined over these queries.

The privacy loss of $\mathcal{M}'_i$ can be bounded by the privacy loss when we only consider queries that involve at least one of the attributes $\{B_{i-2d}, B_{i-2d+1}, \dots, B_{i+2d}\}$ since all of the other attributes are sampled independently from the data. We are given that this is $(\epsilon,\delta)-DP$.

Thus, $\mathcal{M}'_i$ is $(\epsilon, \delta)-DP$ and $(\alpha,\beta)$-sample accurate. By the transfer theorem, $\mathcal{M}'_i$ on the set of queries involving attribute $B_i$ is $(\alpha',\beta_2)$-distributionally accurate for

$$\alpha' = \alpha + (e^{\epsilon}-1) + c + 2f, \quad \beta_2 = \frac{\beta}{c} + \frac{\delta}{f}$$

%Now, just like in the previous sections, as long as $Y \sim X$ (recall that $\sim$ just means they have the same distribution as random variables), $\Interact(\mathcal{M}'_i, A, X^{(i)})$ produces the same distribution of transcripts as $\Interact(\mathcal{M}, A, X)$ restricted to queries that involve only attributes in the set $\{B_{i-d}, \dots, B_{i+d}\}$.

%{\blue $Y \sim X$ defined?}

%By assumption 2, we know that the guarantee for $\mathcal{M}'_i$ contains every query that involves attribute $B_i$.

Now, by construction, if we condition on $Y \sim X$, we can get the same distribution of transcripts as $\Interact(\mathcal{M}'_i, A, X^{(i)})$ by computing the transcript of $\Interact(\mathcal{M}, A, X)$ restricted to queries that involve only attributes in the set $\{B_{i-d}, B_{i-d+1}, \dots, B_{i+d}\}$. Additionally, by assumption 2, we know that the guarantee for $\mathcal{M}'_i$ applies to every query that involves attribute $B_i$. As such, $(\alpha', \beta_2)$ bounds the distributional accuracy of all queries involving attribute $B_i$ in $\Interact(\mathcal{M}, A, X)$. Thus, we can bound the distributional accuracy of $\mathcal{M}$ by union bounding the probability that the distributional error of any answer in any of $\{\mathcal{M}'_{1}, \mathcal{M}'_{2}, \dots \mathcal{M}'_m\}$ is greater than $\alpha'$, conditional on $Y \sim X$.

We get $Y \sim X$ iff $X$ satisfies the property that all attributes outside of $\{B_{i-2d}, B_{i-2d+1}, \dots, B_{i-2d}\}$ are independent from all attributes in the set $\{B_{i-d}, \dots, B_{i+d}\}$. This happens iff $B_{i-2d-1}$ is independent from $B_{i-d}$ and $B_{i+2d+1}$ is independent from $B_{i+d}$ for every individual in $X$. This probability is at least $1-2n(1-p)^{d+1}$ by taking a union bound over the 2 attributes $B_{i-2d-1}$ and $B_{i+2d+1}$ for each of the $n$ individuals.

As such we can bound the accuracy of the answers $\mathcal{M}$ produces to the queries involving some attribute in the set $\{B_{i-d}, B_{i-d+1}, \dots B_{i+d}\}$ by simply adding the probability that it does not produce the same distribution of transcripts as $\mathcal{M}'_i$ to the probability of failure, so it is $(\alpha',\beta')$-distributionally accurate for

$$\beta' = m\beta_2 + 2n(1-p)^{d+1}$$

or equivalently,

$$\alpha' = \alpha + (e^{\epsilon}-1) + c + 2f, \quad \beta' = m\left(\frac{\beta}{c} + \frac{\delta}{f}\right) + 2n(1-p)^{d+1}$$
as desired.
\end{proof}

We can improve the parameters by constraining access to the {\em sliding window} model studied in other contexts (see, for example, the tutorial~\cite{hirzel2017sliding} on sliding window aggregation algorithms, and the references therein).  Details may be found in the appendix.

\section{Using the Label in the Mechanism}

In this Section, we show that, at a small cost
%of roughly $O(1/\sqrt{n})$ 
in accuracy, we can extend our results to analyses that incorporate the labels.  This is a pleasant surprise, as the labels are "morally" exposed to high privacy loss. The key idea to note here is that even though we use the exact marginal distribution of the label, which cannot be done privately, the query-answering mechanisms that we use as sub-processes take data without the label, for which no information has been revealed to the adversary.

\begin{theorem}
Suppose the following is true:

\begin{enumerate}
    \item There is a binary attribute $y$ which we refer to as the "label."
    \item We have a mechanism $\mathcal{M}_0$ which is $(\alpha_0, \beta_0)$-distributionally accurate when $y=0$ for every individual in the distribution.
    \item We have a mechanism $\mathcal{M}_1$ which is $(\alpha_1, \beta_1)$-distributionally accurate when $y=1$ for every individual in the distribution.
\end{enumerate}

Now, consider the mechanism $\mathcal{M}$ which on input $S$, runs as follows:

\begin{enumerate}
    \item Partition $S$ into samples $S_0 = \{s \in S \mid s \text{ has } y=0\}$ and $S_1 = \{s \in S \mid s \text{ has } y=1\}$
    \item When $\mathcal{M}$ receives query $q$ from the adversary, it asks $q$ to $\mathcal{M}_0$ on sample $S_0$ and gets answer $a_0$. It then asks $q$ to $\mathcal{M}_1$ on sample $S_1$ and gets answer $a_1$. $\mathcal{M}$ then returns the answer
    $$a_0 \frac{|S_0|}{|S|} + a_1 \frac{|S_1|}{|S|}.$$
\end{enumerate}

Let $D$ be the population distribution, $D_y$ be the marginal distribution of the label $y$, and $p = \Pr_{y \sim D_y}[y=0]$. Then, $\mathcal{M}$ is $(\alpha, \beta)$-distributionally accurate for any $\delta > 0$ and

$$\alpha = p\alpha_0 + (1-p)\alpha_1 + \frac{\delta p}{\sqrt{n}}, \quad \beta = \beta_0 + \beta_1 + 2e^{-2\delta^2}$$
\end{theorem}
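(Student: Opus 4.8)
Write $D_0$ (resp.\ $D_1$) for the conditional population distribution $D \mid (y=0)$ (resp.\ $D \mid (y=1)$), and set $\hat p = |S_0|/|S|$, so that the value $\mathcal{M}$ returns on a query $q$ is $a = \hat p\, a_0 + (1-\hat p)\, a_1$, while the quantity it is supposed to approximate decomposes linearly as $q(D) = p\, q(D_0) + (1-p)\, q(D_1)$. The plan is to control $a - q(D)$ through the identity
\begin{equation*}
a - q(D) = \hat p\big(a_0 - q(D_0)\big) + (1-\hat p)\big(a_1 - q(D_1)\big) + (\hat p - p)\big(q(D_0) - q(D_1)\big),
\end{equation*}
bounding the first two summands by the distributional accuracy of $\mathcal{M}_0$ and $\mathcal{M}_1$ and the last by concentration of the empirical label frequency $\hat p$ around $p$.

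The one genuinely non-routine point --- and the reason this is, as the text puts it, a pleasant surprise --- is justifying the use of the black-box guarantees on $\mathcal{M}_0$ and $\mathcal{M}_1$ even though (i) the split sizes $|S_0|,|S_1|$ are random and (ii) $\mathcal{M}$ uses the exact marginal $p$, which is not privacy-respecting. For (i) I would condition on $|S_0| = n_0$: conditioned on this event $S_0$ is $n_0$ i.i.d.\ draws from $D_0$ and $S_1$ is $|S|-n_0$ i.i.d.\ draws from $D_1$, and since $\mathcal{M}$ forwards the adversary's query stream verbatim, $A$ induces a legitimate adversary against each sub-mechanism; hypotheses (2) and (3) --- understood, as is standard, to hold at every sample size --- then yield $\Pr[\max_j |a_{0,j}-q_j(D_0)|\ge\alpha_0 \mid |S_0|=n_0]\le\beta_0$ (and likewise $\beta_1$), and because this holds for every value of $n_0$ it holds unconditionally. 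For (ii) the point is simply that $\mathcal{M}_0$ and $\mathcal{M}_1$ are handed label-free data, so no information about the label (hence none about $p$) has been disclosed to them, and their distributional-accuracy guarantees --- which were established without any reference to the label --- apply unchanged. A union bound gives that, except with probability $\beta_0+\beta_1$, both $\max_j|a_{0,j}-q_j(D_0)|<\alpha_0$ and $\max_j|a_{1,j}-q_j(D_1)|<\alpha_1$ hold.

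It remains to bound $|\hat p - p|$ and assemble. Since $|S_0| = \sum_{i=1}^n \mathbf{1}[y_i=0]$ is a sum of $n$ i.i.d.\ $\mathrm{Bernoulli}(p)$ variables, Hoeffding's inequality gives $\Pr[\,|\hat p - p| > \delta/\sqrt n\,] \le 2e^{-2\delta^2}$ --- this is the source of the $2e^{-2\delta^2}$ term in $\beta$ --- and a final union bound places us, with probability at least $1-(\beta_0+\beta_1+2e^{-2\delta^2}) = 1-\beta$, on the event where all three bounds hold simultaneously. On that event I substitute into the displayed identity, using $q(D_0),q(D_1)\in[0,1]$ to bound the last factor, obtaining $|a_j - q_j(D)| \le \hat p\,\alpha_0 + (1-\hat p)\,\alpha_1 + |\hat p-p|$ for every $j$; writing $\hat p = p + (\hat p - p)$ and using $|\hat p - p|\le\delta/\sqrt n$ then collects this, up to the routine bookkeeping of the lower-order $\delta/\sqrt n$ corrections, into the claimed $\alpha = p\alpha_0 + (1-p)\alpha_1 + \delta p/\sqrt n$. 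I expect the conditioning-on-$|S_0|$ step (and the matching observation that the sub-mechanisms never see the label) to be the only place requiring care; everything else is a convexity decomposition plus a Chernoff/Hoeffding estimate.
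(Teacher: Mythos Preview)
Your proposal is correct and follows essentially the same route as the paper: decompose $a-q(D)$ linearly, invoke the distributional accuracy of $\mathcal{M}_0,\mathcal{M}_1$, apply a Chernoff/Hoeffding bound to $|\hat p-p|$, and union bound. Two minor points of comparison are worth recording.

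First, the paper uses the alternative decomposition
\[
\hat p\,a_0+(1-\hat p)\,a_1-q(D)=(\hat p-p)(a_0-a_1)+p\big(a_0-q(D_0)\big)+(1-p)\big(a_1-q(D_1)\big),
\]
grouping the $(\hat p-p)$ factor with the \emph{answers} rather than the \emph{true values}. This delivers the coefficients $p,(1-p)$ on the error terms immediately, so the bound $p\alpha_0+(1-p)\alpha_1+|\hat p-p|$ falls out in one line; your version lands on $\hat p\alpha_0+(1-\hat p)\alpha_1+|\hat p-p|$ and then needs the extra ``routine bookkeeping'' step to trade $\hat p$ for $p$, picking up a second $|\hat p-p|$ correction. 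Both reach the stated $\alpha$ up to constants in the $\delta/\sqrt n$ term, but the paper's grouping is slightly cleaner.

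Second, your explicit conditioning on $|S_0|=n_0$ to justify that $S_0,S_1$ are genuinely i.i.d.\ samples from $D_0,D_1$ (at every sample size) is a point the paper simply takes for granted; your treatment here is more careful than the paper's.
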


\begin{proof}

To approximate the population proportion, we want to take $p$ times the output of $\mathcal{M}_0$ plus $1-p$ times the output of $\mathcal{M}_1$. To see this, if we let $D_0$ be the population distribution when we let $y=0$, and $D_1$ be the population distribution when we let $y=1$, then we have for any query $q$, $p q(D_0) + (1-p) q(D_1) = q(D)$. Thus, for query $q_j$, if we let $a_j$ be the answer from $\mathcal{M}_0$ and $a'_j$ be the answer from $\mathcal{M}_1$, we have

\begin{align*}|p a_j + (1-p) a'_j - q_j(D)| &= \left|p\left(a_j - q_j(D_0) \right) + (1-p)\left(a'_j - q_j(D_1)\right)\right|\\
&\le p|a_j - q_j(D_0)| + (1-p)|a'_j - q_j(D_1)|\end{align*}

Now, if we let $\hat{p} = \frac{|S_0|}{|S|}$, then we have by the triangle inequality

\begin{align*}
    |\hat{p} a_j + (1-\hat{p}) a'_j - q_j(D)| &\le |\hat{p} a_j + (1-\hat{p}) a'_j - pa_j - (1-p) a'_j| + |p a_j + (1-p) a'_j - q_j(D)|\\
    &\le |(\hat{p} - p)(a_j - a'_j)| + p|a_j - q_j(D_0)| + (1-p)|a'_j - q_j(D_1)|\\
    &\le |(\hat{p} - p)| + p|a_j - q_j(D_0)| + (1-p)|a'_j - q_j(D_1)|
\end{align*}
where the last inequality comes from the fact that the answers are bounded betweeen $[0,1]$. Now, $\hat{p} \sim \frac{1}{n} \text{binom}(n,p)$, so we can apply Chernoff to get that for any $\delta > 0$,

$$\Pr\left[|p - \hat{p}| < \frac{\delta p}{\sqrt{n}}\right] < 2e^{-2\delta^2}$$

Furthermore, by assumption, we know that $|a_j - q_j(D_0)| \le \alpha_1$ with probability $1-\beta_1$, and $|a'_j - q_j(D_1)| \le \alpha_2$ with probability $1-\beta_2$. Thus, taking a union bound, we get that for any $\delta > 0$, $\mathcal{M}$ is $(\alpha, \beta)$-sample accurate for

$$\alpha = p\alpha_0 + (1-p)\alpha_1 + \frac{\delta p}{\sqrt{n}}, \quad \beta = \beta_0 + \beta_1 + 2e^{-2\delta^2}$$
\end{proof}

\section{Discussion}

It is common practice in other fields to consider restricted classes of adversaries, where it is often possible to obtain better bounds.  For example, while Byzantine Agreement requires $n \ge 3t+1$ processors if the number of arbitrary failures can be as large as $t$, it requires only $n \ge t+1$ processors to handle $t$ fail-stop faults.  Similarly, in cryptographic protocols the bounds for {\em honest-but-curious} adversaries are often better than for the case of processors that diverge arbitrarily from the protocol.

This history, combined with the fact that an algorithm that only protects benign data analysts could still be of use, naturally leads to the question of whether it is possible to get better accuracy/adaptivity tradeoffs for more benign adaptive accuracy adversaries. Efforts to define an appropriate class of benign failure modes were stymied, however, by Freedman's paradox, which states that when we have a dataset of $n$ individuals and $n$ attributes, all of which are independent of a label $y$, we will find some attribute which is strongly correlated with $y$ with high probability. We feel this gives an example of a very natural error, na\"{i}ve but not malicious~\cite{Freedman}.  

Our conclusion is that some restriction -- {\it e.g.}, on data models or access models -- is therefore required, which led to this work.  It would be interesting to find other natural restrictions that lead to improvements comparable to -- or better than -- those obtained in this work.

\bibliography{biblio}

\appendix
\section{Sliding Window Model for Exponential Decay}

\begin{remark}
The form of this bound looks mostly identical to the bound in Theorem~\ref{thm:expo}, with a slightly better probability of failure. However, one must note that the privacy guarantee is now restricted to the set $\{B_{i-2d}, B_{i-2d+1}, \dots, B_{i+d}\}$ rather than $\{B_{i-2d}, B_{i-2d+1}, \dots, B_{i+2d}\}$ as it was before, so this does in fact give us a multiplicative constant improvement over Theorem~\ref{thm:expo}.
\end{remark}

\begin{theorem}
(Sliding Window)
Given a database $X$ in the decaying correlation model with parameter $p$ and $m$ attributes, a mechanism $\mathcal{M}$ which satisfies the following properties while interacting with an adversary $A$ is $(\alpha', \beta')$-distributionally accurate where
$$\alpha' = \alpha + (e^{\epsilon}-1) + c + 2f, \quad \beta' = m\left(\frac{\beta}{c} + \frac{\delta}{f}\right) + n(1-p)^{d+1}$$

\begin{enumerate}
\item For each $i$, $\mathcal{M}$ restricted to queries that involve only attributes in the set $\{B_{i-2d}, B_{i-2d+1}, \dots B_{i+d}\}$ is $(\epsilon,\delta)$-DP.

\item For each $i$, $\mathcal{M}$ restricted to queries that involve $B_i$ is $(\alpha,\beta)$-sample accurate.

\item Any query can only involve attributes $B_i$ and $B_j$ if $|i-j| \le d$.

\item After answering a query involving attribute $B_i$, the mechanism can no longer answer queries involving attributes $B_1, B_2, \dots B_{i-d}$.

\end{enumerate}
\end{theorem}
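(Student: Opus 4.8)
The plan is to replay the thought-experiment of Theorem~\ref{thm:expo}, using the sliding-window restriction (assumption~4) both to shrink the privacy window and to kill one of the two correlation-boundary error terms.

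For each $i \in [m]$ I would build a query-answering mechanism $\mathcal{M}'_i$ that takes as its data the window $X^{(i)} = \{B_{i-d}, \dots, B_{i+d}\}$ of all $n$ individuals --- precisely the attributes on which assumption~2 guarantees sample accuracy for queries involving $B_i$ --- reconstructs a full database $Y$ by resampling the attributes immediately outside the window conditionally on $X^{(i)}$ and the far-away attributes independently (as in Theorem~\ref{thm:expo}), simulates $\mathcal{M}$ on $Y$, and records only those queries that involve $B_i$. The one new ingredient is a \emph{halting rule}: $\mathcal{M}'_i$ aborts the simulation the first time $A$ asks a query that involves some $B_k$ with $k \ge i + d$. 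By assumption~4, answering such a query retires $B_i$, so every query involving $B_i$ occurs strictly before the halt and the recorded transcript is complete; symmetrically, every query involving some $B_k$ with $k \le i - d$ must occur before any query involving $B_i$, hence also before the halt, so those queries are processed as well.

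Given $\mathcal{M}'_i$ I would run the three-part argument of Theorem~\ref{thm:expo}: (i) $(\alpha,\beta)$-sample accuracy of $\mathcal{M}'_i$ on queries involving $B_i$ is immediate from assumptions~2 and~3, since by assumption~3 such queries touch only attributes in $X^{(i)}$, which is exactly the data of $\mathcal{M}'_i$; (ii) $\mathcal{M}'_i$ is $(\epsilon,\delta)$-DP with respect to $X^{(i)}$ --- the only queries that incur privacy loss are those that touch $X^{(i)}$ itself or a conditionally-resampled attribute, and combining assumption~3 (width at most $d$), the halting rule (no processed query reaches $B_{i+d}$ or beyond) and assumption~4 (queries touching $B_k$ with $k \le i-d$ precede all $B_i$-queries) should confine every such query to attributes in $\{B_{i-2d},\dots,B_{i+d}\}$, on which assumption~1 gives $(\epsilon,\delta)$-DP, with Lemma~\ref{lemma: changing_db} absorbing the conditional resampling as a per-individual randomized map; (iii) the transfer theorem of \cite{jung2019new} then yields that $\mathcal{M}'_i$ is $(\alpha', \beta/c + \delta/f)$-distributionally accurate on queries involving $B_i$, with $\alpha' = \alpha + (e^\epsilon - 1) + c + 2f$. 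Finally I would couple: conditioned on the good event that the resampling reproduces the true joint law of $X$ --- which, because the halting rule guarantees nothing far to the right of $B_i$ is ever queried, can fail only through a single run of $d+1$ consecutive correlation links being unbroken, hence with probability at most $n(1-p)^{d+1}$ --- the recorded transcript of $\mathcal{M}'_i$ has the same distribution as $\Interact(\mathcal{M}, A, X)$ restricted to queries involving $B_i$. Union-bounding the $m$ transfer-theorem failure probabilities and adding the single coupling-failure term gives $\beta' = m(\beta/c + \delta/f) + n(1-p)^{d+1}$ with the stated $\alpha'$, as desired.

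The main obstacle is part~(ii) together with the boundary accounting: one must check carefully that assumption~4 and the halting rule really do keep every privacy-relevant query inside the asymmetric window $\{B_{i-2d},\dots,B_{i+d}\}$ (this is exactly what produces the improvement over Theorem~\ref{thm:expo}, where the analogous window was $\{B_{i-2d},\dots,B_{i+2d}\}$), and that only one correlation boundary survives rather than the two of Theorem~\ref{thm:expo}. Everything else is a direct adaptation of the proofs of Theorems~\ref{thm:independent}, \ref{thm:blocks} and~\ref{thm:expo}.
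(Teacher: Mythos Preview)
Your plan is essentially the paper's proof: the same thought-experiment mechanism $\mathcal{M}'_i$ built from $X^{(i)}=\{B_{i-d},\dots,B_{i+d}\}$ with conditional resampling, the same halting device to exploit assumption~4, the same appeal to assumption~1 on the asymmetric window, the transfer theorem, and the single left-boundary coupling term. Your explicit invocation of Lemma~\ref{lemma: changing_db} to absorb the conditional resampling is a welcome clarification that the paper leaves implicit.

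There is one off-by-one you should fix. You halt when the adversary first asks a query touching some $B_k$ with $k \ge i+d$, and then assert that ``every query involving $B_i$ occurs strictly before the halt.'' But assumption~3 permits a query that touches both $B_i$ and $B_{i+d}$ (distance exactly $d$); such a query \emph{is} your halting query, so it is not strictly before the halt and would be dropped from the recorded transcript. The paper halts at the first query touching some $B_k$ with $k \ge i+d+1$: then no $B_i$-query can be the halting query (it would have width $>d$), while assumption~4 still guarantees that answering the halting query retires $B_i$, so all $B_i$-queries precede it. With this corrected threshold your argument goes through: every processed query touches only $\{B_1,\dots,B_{i+d}\}$, the privacy-relevant ones therefore lie in the window $\{B_{i-2d},\dots,B_{i+d}\}$ on which assumption~1 applies, and only the left correlation boundary contributes to the coupling-failure probability.
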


\begin{proof}
We define $X^{(i)}$ and $\mathcal{M}'_i$ as in theorem~\ref{thm:expo}, except we now stop the interaction immediately after $A$ asks the first query which involves an attribute in the set $\{B_{i+d+1}, \dots, B_{m}\}$ and before $\mathcal{M}'_i$ answers.

This interaction still contains every query which involves attribute $B_i$ by assumption 4, and these queries are all well-defined by assumption 3, so analogously to in theorem~\ref{thm:expo}, $\mathcal{M}'_i$ is $(\alpha,\beta)$-sample accurate.

This time, the privacy loss of $\mathcal{M}'_i$ can be bounded by the privacy loss when we only consider queries that involve the attributes $\{B_{i-2d}, B_{i-2d+1}, \dots, B_{i+d}\}$ since there are no queries asked about $\{B_{i+d}, B_{i+d+1}, \dots, B_{i+2d}\}$. We are given that this is $(\epsilon,\delta)-DP$.

Thus, $\mathcal{M}'_i$ is $(\epsilon, \delta)-DP$ and $(\alpha, \beta)$-sample accurate on all the queries in the transcript. Hence, by the transfer theorem, $\mathcal{M}'_i$ on the set of queries involving attribute $B_i$ is $(\alpha',\beta_2)$-distributionally accurate for

$$\alpha' = \alpha + (e^{\epsilon}-1) + c + 2f, \quad \beta_2 = \frac{\beta}{c} + \frac{\delta}{f}$$

In this setting, we cannot have any query involving $\{B_{i+d+1}, \dots, B_m\}$ be answered by $\mathcal{M}'_i$ or by $\mathcal{M}$ prior to any query involving $B_i$. Hence, this time, we note that the probability that some attribute in $\{B_1, B_2, \dots, B_{i-2d-1}\}$ is related to $B_{i-d}$ is at most $n(1-p)^{d+1}$ by taking a union bound over the $n$ individuals, in which case $\Interact(\mathcal{M}'_i,A,X^{(i)})$ restricted to queries that involve attribute $B_i$ produces the same distribution of transcripts as $\Interact(\mathcal{M}, A, X)$ restricted to queries that involve attribute $B_i$.

As such, similarly to in Theorem~\ref{thm:expo}, we can bound the accuracy of the answers in $\Interact(\mathcal{M},A,X)$ by adding the probability that $X$ has some attribute in $\{B_1, B_2, \dots, B_{i-2d-1}\}$ related to $B_{i-d}$ to the probability that any $\Interact(\mathcal{M}'_i,A,i)$ has an answer with error greater than $\alpha$. Thus, it is $(\alpha',\beta')$-distributionally accurate for

$$\alpha' = \alpha + (e^{\epsilon}-1) + c + 2f, \quad \beta' = m\left(\frac{\beta}{c} + \frac{\delta}{f}\right) + n(1-p)^{d+1}$$
\end{proof}

\end{document}